\newtheorem{theorem}{Theorem}
\newtheorem{lemma}[theorem]{Lemma}
\newtheorem{remark}[theorem]{Remark}
\newtheorem{proposition}[theorem]{Proposition}
\newtheorem{example}[theorem]{Example}
\newcommand{\C}{{\mathcal{C}}}
\newcommand{\bF}{{\mathbb F}}
\begin{document}
%
\title{ The weight hierarchies of three classes of linear codes
\thanks{
}
}
\author{Wei Lu,\,\, Qingyao Wang,\,\, Xiaoqiang Wang,\,\, Dabin Zheng{\thanks{W. Lu, Q. Wang, X. Wang, D. Zheng are with Hubei Key Laboratory of Applied Mathematics, Faculty of Mathematics and Statistics, Hubei University, Wuhan 430062, China. Email:  weilu23@hubu.edu.cn; 2300805020@qq.com; waxiqq@163.com; dzheng@hubu.edu.cn.
 The corresponding author is Dabin Zheng. The work of D. Zheng was supported in part by the National Natural Science Foundation of China under the Grant Numbers 62272148, 11971156, and the work of X. Wang was supported in part by   Natural Science Foundation of Hubei Province of China under the Grant Number 2023AFB847, and Sunrise Program of Wuhan under the Grant Number 2023010201020419, and
 The work of W. Lu was supported by Postdoctoral Fellowship Program of CPSF under grant No.GZC20230750.
}}
}

%
%
%
%

\maketitle

\begin{abstract}
Studying the generalized Hamming weights of linear codes is a significant research area within coding theory, as it provides valuable structural information about the codes and plays a crucial role in determining their performance in various applications. However, determining the generalized Hamming weights of linear codes, particularly their weight hierarchy, is generally a challenging task. In this paper, we focus on investigating the generalized Hamming weights of three classes of linear codes over finite fields. These codes are constructed by different defining sets. By analysing the intersections between the definition sets and the duals of all $r$-dimensional subspaces, we get the inequalities on the sizes of these intersections. Then constructing subspaces that reach the upper bounds of these inequalities, we successfully determine the complete weight hierarchies of these codes.
\end{abstract}

\textbf{MSC 2000} \ \ 94B05; 94B15; 11T71

\textbf{Keywords} \ \  Linear code, generalized Hamming weight, defining set, weight hierarchy.

%
\IEEEpeerreviewmaketitle

\section{Introduction}\label{sec-auxiliary}

The concept of generalized Hamming Weights (GHWs) was first introduced by Helleseth et al.~\cite{Hellesethetal1977} and Kl{\o}ve~\cite{Klove1978} as a natural extension of the minimum Hamming distance of linear codes. GHWs provide fundamental information about linear codes and have significant implications in various applications. In 1991, Wei~\cite{Wei1991} presented a series of remarkable results on GHWs, using them to characterize the cryptography performance of linear codes in wire-tap channels of type II. Since then, GHWs of linear codes have become an interesting research topic in both theory and applications. They have been utilized in dealing with t-resilient functions, trellis or branch complexity of linear codes~\cite{Chor1985,Geometric}, computation of state and branch complexity profiles~\cite{Forney1994,Kasamietal1993}, shortening linear codes efficiently~\cite{Hellesethetal19951}, determination of erasure list-decodability~\cite{Guruswami2003}, and more.

Significant progress has been made over the past two decades in the study of GHWs of linear codes. Various lower and upper bounds for GHWs of some linear codes have been established by~\cite{Ashikhmin1999,Cohen1994,Hellesethetal19951}. GHW values have been determined or estimated for many classes of linear codes, including Hamming codes~\cite{Wei1991}, Reed-Muller codes~\cite{Heijnen1998,Wei1991}, binary Kasami codes~\cite{Hellesethetal19952}, Melas and dual Melas codes~\cite{vandergeer1994}, certain BCH codes and their duals \cite{Cheng1997,Duursma1996,Feng1992,Moreno1998,Shim1995,vandergeer19942,vandergeer1995}, certain trace codes~\cite{Cherdien2001,Stichtenoth1994,vandergeer19952,vandergeer1996}, certain algebraic geometry codes \cite{Yang1994}, and some cyclic codes \cite{Lishuxing2017,Ge2016,Yang2015}. However, determining the complete weight hierarchies of linear codes remains a challenging task, and to the best of our knowledge, the complete weight hierarchies of only a few linear codes are known.

For a set $D=\{d_1,d_2,\ldots, d_n\}\subset {\mathbb F}_{q^m}$, Ding and Niederreiter~\cite{Ding2007} proposed a generic method for constructing linear codes as follows:
\begin{equation}\label{linearcode1}
\C_D=\left\{ {\bf c}(x) = \left( {\rm Tr}_1^m( xd_1), {\rm Tr}_1^m(xd_2), \ldots, {\rm Tr}_1^m(xd_n) \right): x\in {\mathbb F}_{q^m} \right\},
\end{equation}
where ${\rm Tr}_1^m(\cdot)$ is the trace function from $\bF_q^m$ to $\bF_q$.
Then $\C_D$ is a linear code over $\bF_q$ with length $n$. The set $D$ is called the defining set of $\C_D$. This construction is fundamental since every liner code over $\bF_q$ can be represented as $\C_D$ for some defining set $D$~\cite{Xiang2016}.
In recent years, Ding and Niederreiter's construction was extended to the bivariate form, namely,
\begin{equation}\label{linearcode2}
\C_D = \left\{  ({\rm Tr}_1^{m}(ax)+{\rm Tr}_1^{k}(by))_{(x,y) \in D}:a \in {\mathbb F}_{q^m},b\in {\mathbb F}_{q^k} \right\},
\end{equation}
where $D\subset {\mathbb F}_{q^m}\times {\mathbb F}_{q^k}$. There has been extensive research on the parameters and Hamming weight distributions of linear codes constructed using definition sets, leading to fruitful results. Li and Mesnager~\cite{LiMesnager2020} provided a survey on linear codes constructed from cryptographic functions.

Recently, attention has been paid to the GHWs of linear codes constructed from defining sets. Jian et al.~\cite{Jian2017} were the first to investigate GHWs of linear codes constructed from skew sets. Subsequently, Li~\cite{Li2018}, Li~\cite{Li2021}, Liu et al.~\cite{LiuZheng2023}, Liu and Wang~\cite{LiuWang2020}, Li and Li~\cite{LiLi2021,LiLi2022}, and Li et al.~\cite{Likangquan2022} explored GHWs of specific classes of linear codes constructed from defining sets associated with quadratic functions, cyclotomic classes, or cryptographic functions. While weight distributions of linear codes constructed using definition sets have been extensively studied, the knowledge about the weight hierarchies (the definition given in Section~\ref{sec-auxiliary}) of this class of linear codes is still limited. Even if the weight distribution of linear codes is known, it is both interesting and challenging to find a suitable definition set of linear codes and thus determine the Hamming weight hierarchy of the corresponding linear codes. In this paper, we determine the weight hierarchies of three classes of linear codes, and the weight distributions of these codes have been given in \cite{Hu2021,Hu2022}. Specially, the constructions of the codes are as follows:
\begin{enumerate}
\item[$\bullet$]
The first class of linear codes $\mathcal{C}_D$ is defined in (\ref{linearcode1}) with defining set $D= \bF_{q^m}\setminus \Omega$ and $\Omega  = \bigcup_{i=0}^{h} (\theta_i+\bF_{q^{k}})$, where $1\le k< m $, $k\,|\,m$, $0\le h\le q-1$, $\theta_0 = 0$,  $\theta_i \in \bF_{q^m}^*$ and $\theta_i - \theta_j \notin \bF_{q^k}$ for $0\le j < i \le h$.
\item[$\bullet$]  The second class of linear codes $\mathcal{C}_D$ is defined in (\ref{linearcode2}) with defining set
$$
D=\{(x,y)\in {\mathbb F}_{q^m}\times {\mathbb F}_{q^k}: x \in {\mathbb F}_{q^m}\setminus {\mathbb F}_{q^s}, y \in {\mathbb F}_{q^k}\setminus {\mathbb F}_{q^l}\},
$$
where $0<s<m$, $0<\ell<k$, $s|m$, $\ell \, |\, k$ and $k-\ell\le m-s$.
\item[$\bullet$] The third class of linear codes $\mathcal{C}_D$ is defined in (\ref{linearcode2}) for $q=2$ and $k=m$ with defining set
$$
D=\{(x,y) \in \bF_{2^m}^2 : ({\rm Tr}_1^m (x(x+y)), {\rm Tr}_1^m(y(x+y))) = (0,1) \}.
$$
\end{enumerate}

The key to calculating the $r$-th generalized Hamming weight of these linear codes is to determine the maximum value of the intersection between the definition sets and the duals of all $r$-dimensional subspaces for a positive integer $r$. The sizes of these intersections can be expressed as some exponential sums over finite fields. By analyzing these exponential sums in detail we get the inequalities for exponential sums. Then, with these inequalities, we construct subspaces within the discussed codes that achieve this established boundary, then determine the $r$-th generalized Hamming weight of discussed linear codes.

%

The rest of this paper is organized as follows: In Section 2, we introduce the linear codes from the defining-sets and their generalized Hamming weights. Section 3 determines the weight hierarchy of three classes of linear codes from defining sets. These defining sets are of the unitary form or the bivariate form, respectively. Finally, Section 4 concludes the paper.

\section{Preliminaries}\label{sec-auxiliary}

Let $q$ be a prime number and ${\mathbb F}_q$ be a finite field with $q$ elements. An $[n,k,d]$ linear code $\C$ over the finite field  $\bF_q$  is a $k$-dimensional subspace of $\bF_q^n$
with length $n$ and minimum Hamming distance $d$. For each $r$ with $1\leq r\leq k$, let $[\C, r]$ denote the set of all $r$-dimensional $\bF_q$-subspaces of $\C$. For each $H\in [\C, r]$, the {\em support} of $H$, denoted by supp($H$), is the set of not-always-zero bit positions of vectors in $H$, i.e.,
$$ {\rm supp}(H)=\left\{ i \, :\, 1\leq i\leq n, \,\, c_i\neq 0  \,\, {\rm for \,\, some} \,\, (c_1,c_2,\ldots,c_n)\in H \right\}.$$
The $r$-th {\em generalized Hamming weight} (GHW) of $\C$ is defined by
	$$ d_r(\C)= {\rm min} \left\{ \,|{\rm supp}(H)|\,:\, H \in [ \C, r]\, \right\}.$$  The set $\left\{ d_1(\C), d_2(\C),\cdots, d_k(\C)\right\}$ is called the {\em weight hierarchy}
of $\C$. It is worth noting that $d_1(\C)$ is just the minimum Hamming distance of~$\C$.

\subsection{Linear codes with the unitary form}

Let $\C_{D}$ be a linear code defined in (\ref{linearcode1}) with dimension $k$. It is easy to see that $k=m- {\rm dim}_{\bF_q} K$, where
\begin{equation}\label{eq:0522}
K=\left\{ x \in {\mathbb F_{q^m}}\, :\, ( {\rm Tr}_1^m(dx) )_{d \in D} = {\bf 0} \right\}.
\end{equation}
Let $H$ be a subspace of $\bF_{q^m}$, and its dual is defined as follows:
\begin{equation}\label{eq:052201}
H^\perp=\{x\in \bF_{q^m}: {\rm Tr}_1^m(xh)=0\  {\rm for\ all}\ h\in H\}.
\end{equation}
It is easy to show that
${\rm dim}_{\mathbb{F}_q}(H)+{\rm dim}_{\mathbb{F}_q}(H^{\perp})=m$ and $(H^{\perp})^{\perp}=H$.
 In \cite{Likangquan2022}, Li, Chen and Qu have introduced a method to compute the generalized Hamming weight of $\C_
D$, which is outlined as follows.
\begin{proposition}\cite{Likangquan2022}\label{GHW}
Let $K$ be as mentioned in (\ref{eq:0522}), and let $\C_{D}$ be a linear code defined in (\ref{linearcode1}) with dimension $m-{\rm dim}_{\bF_q} K$. Then, for any integer $1\leq r\leq m-{\rm dim}_{\bF_q} K$, the $r$-th generalized Hamming weight of $\C_{D}$ is
\begin{equation}\label{eq:ghw}
d_r(\C_D) =n-{\rm max}\, \left\{ |D\cap H^\perp| : H \in [{\mathbb F_{q^m}}, r]\,\, {\rm and} \,\, H\cap K=\{0 \} \right\},
\end{equation}
where $[{\mathbb F}_{q^m}, r]$ denotes the set of all $r$-dimensional subspaces of ${\mathbb F}_{q^m}$.
\end{proposition}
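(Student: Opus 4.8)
The plan is to realize $\C_D$ as the image of a single $\bF_q$-linear map and then translate each quantity $|{\rm supp}(H)|$ into an intersection size $|D\cap H^\perp|$. Consider the evaluation map
$$\phi\colon \bF_{q^m}\to \bF_q^n,\qquad \phi(x)=\mathbf{c}(x)=\big({\rm Tr}_1^m(xd_1),\ldots,{\rm Tr}_1^m(xd_n)\big).$$
This map is $\bF_q$-linear, its image is exactly $\C_D$, and by (\ref{eq:0522}) its kernel is precisely $K$. Hence $\phi$ induces an isomorphism $\bF_{q^m}/K\cong\C_D$, which in particular recovers $\dim_{\bF_q}\C_D=m-\dim_{\bF_q}K$ and makes the range $1\le r\le m-\dim_{\bF_q}K$ the correct one.

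First I would set up a correspondence between the $r$-dimensional subspaces of $\C_D$ and the admissible subspaces $H$ on the right-hand side of (\ref{eq:ghw}). For $H\in[\bF_{q^m},r]$, the restriction $\phi|_H$ has kernel $H\cap K$, so $\dim_{\bF_q}\phi(H)=r$ holds exactly when $H\cap K=\{0\}$; thus every admissible $H$ produces an $r$-dimensional subspace $\phi(H)\subseteq\C_D$. For the reverse direction, take any $W\in[\C_D,r]$. Since $W\subseteq\image\phi$, its preimage $\phi^{-1}(W)$ contains $K$ and satisfies $\dim_{\bF_q}\phi^{-1}(W)=r+\dim_{\bF_q}K$ by rank–nullity applied to $\phi|_{\phi^{-1}(W)}$. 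Choosing a complement $H$ of $K$ inside $\phi^{-1}(W)$ yields an admissible subspace with $\dim_{\bF_q}H=r$, $H\cap K=\{0\}$, and $\phi(H)=W$. Therefore $H\mapsto\phi(H)$ carries the admissible subspaces onto all of $[\C_D,r]$.

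Next I would compute the support of $W=\phi(H)$ by unwinding definitions. A coordinate $i$ fails to lie in ${\rm supp}(\phi(H))$ exactly when every codeword of $\phi(H)$ vanishes in position $i$, i.e.\ when ${\rm Tr}_1^m(xd_i)=0$ for all $x\in H$; by the definition (\ref{eq:052201}) of the dual this is the condition $d_i\in H^\perp$. Because the $d_i$ are the distinct elements of $D$, the number of such coordinates equals $|D\cap H^\perp|$, so
$$|{\rm supp}(\phi(H))|=n-|D\cap H^\perp|.$$
Combining this with the correspondence above, minimizing $|{\rm supp}(W)|$ over $W\in[\C_D,r]$ is equivalent to maximizing $|D\cap H^\perp|$ over admissible $H$, which is precisely (\ref{eq:ghw}).

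The main obstacle is the surjectivity half of the correspondence: one must verify that every $r$-dimensional subspace of $\C_D$ arises as $\phi(H)$ for some $H$ with $H\cap K=\{0\}$, and not merely as the image of some larger subspace of $\phi^{-1}(W)$. This is exactly where the complement construction, together with the dimension count $\dim_{\bF_q}\phi^{-1}(W)=r+\dim_{\bF_q}K$, is needed to guarantee that the intersection condition $H\cap K=\{0\}$ can always be met while still hitting $W$. The remaining steps are direct translations of the definitions of \emph{support} and of the dual subspace $H^\perp$.
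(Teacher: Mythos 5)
Your proof is correct. A preliminary remark: the paper does not actually prove Proposition \ref{GHW} --- it is quoted from \cite{Likangquan2022} --- and the only proof given in the paper is that of its bivariate generalization, Proposition \ref{GHW2}, so that is the natural benchmark. The skeleton of your argument matches that proof: both establish that the $r$-dimensional subspaces of $\C_D$ are exactly the images $\phi(H)$ of the admissible subspaces $H$ (those with $H\cap K=\{0\}$), and both then reduce $|{\rm supp}(\phi(H))|$ to $n-|D\cap H^{\perp}|$. The executions differ in two places. For surjectivity of $H\mapsto\phi(H)$, the paper lifts a basis $v_1,\dots,v_r$ of a given subspace of the code to representatives in the ambient space and spans them (their linear independence modulo $K$ is left implicit), while you take a complement of $K$ inside $\phi^{-1}(W)$; the two are equivalent, and your version makes the dimension bookkeeping and the intersection condition $H\cap K=\{0\}$ more transparent. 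For the support formula, the paper expands the indicator of a vanishing coordinate as a character sum and invokes orthogonality (Lemma \ref{lm:charactersum101}, the analogue of Lemma \ref{lm:charactersum1}), whereas you simply observe that coordinate $i$ vanishes on all of $\phi(H)$ if and only if $d_i\in H^{\perp}$, which is literally the definition (\ref{eq:052201}) of the dual; this is more elementary and avoids character sums entirely. What the paper's heavier formulation buys is continuity with the rest of the article: the same exponential-sum apparatus is reused throughout Section 3 to evaluate $\max|D\cap H^{\perp}|$ for the concrete defining sets, so introducing it here is not wasted effort in context, but for the proposition in isolation your route is the cleaner one.
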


\begin{lemma}(Corollary 1.1.25 in \cite{Algebraic_Geometric})
\label{lm:charactersum1}
Let $H$ be a subspace of $\bF_{q^m}$. Let $\theta\in\bF_{q^m} $ and $\zeta_q$ be a q-th primitive root of unity, then we have
$$
\sum_{x\in H}\zeta_q^{{\rm Tr}_1^m(\theta x)}=
\begin{cases}
 0,& \text{$\theta\notin H^\perp$},
\\
|H|,& \text{$\theta\in H^\perp$}.
\end{cases}
$$
\end{lemma}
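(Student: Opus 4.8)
The plan is to recognize the map $\chi\colon H\to\bC^*$ given by $\chi(x)=\zeta_q^{{\rm Tr}_1^m(\theta x)}$ as an additive character of the finite abelian group $(H,+)$ and then apply the standard orthogonality argument for characters. First I would verify that $\chi$ is a group homomorphism: since ${\rm Tr}_1^m$ is $\bF_q$-linear and the map $x\mapsto \theta x$ is additive, we have ${\rm Tr}_1^m(\theta(x+y))={\rm Tr}_1^m(\theta x)+{\rm Tr}_1^m(\theta y)$ for all $x,y\in H$, and hence $\chi(x+y)=\chi(x)\chi(y)$. Thus the quantity to be evaluated is exactly the character sum $\sum_{x\in H}\chi(x)$.

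Next I would split into two cases according to whether $\chi$ is trivial on $H$. By the definition of the dual in (\ref{eq:052201}), the character $\chi$ is trivial on $H$ — meaning $\chi(x)=1$ for every $x\in H$ — precisely when ${\rm Tr}_1^m(\theta x)=0$ for all $x\in H$, which is exactly the condition $\theta\in H^\perp$. In that case every summand equals $1$, so the sum equals $|H|$, yielding the second branch. For the remaining case $\theta\notin H^\perp$, the character $\chi$ is nontrivial, so there exists $x_0\in H$ with $\chi(x_0)\neq 1$. Writing $S=\sum_{x\in H}\chi(x)$ and reindexing the sum via the translation $x\mapsto x_0+x$, which is a bijection of $H$ onto itself, I obtain $\chi(x_0)S=\sum_{x\in H}\chi(x_0)\chi(x)=\sum_{x\in H}\chi(x_0+x)=S$. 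Therefore $(\chi(x_0)-1)S=0$, and since $\chi(x_0)\neq 1$ this forces $S=0$, giving the first branch.

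There is no genuine obstacle in this argument, since it is the classical orthogonality relation for additive characters specialized to the subspace $H$. The only point that merits care is the equivalence between $\chi$ being trivial on $H$ and the membership $\theta\in H^\perp$, but this follows directly from the definition (\ref{eq:052201}) of the dual, and the translation-invariance trick handles the nontrivial case uniformly.
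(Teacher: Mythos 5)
Your proof is correct and complete: it is the classical orthogonality relation for additive characters, with the trivial/nontrivial dichotomy keyed to $\theta\in H^\perp$ and the translation trick handling the nontrivial case. Note that the paper does not prove this lemma at all --- it is cited as Corollary 1.1.25 of Tsfasman--Vl\u{a}du\c{t} --- so your argument simply supplies the standard reasoning behind that citation; the only implicit hypothesis you rely on, namely that $q$ is prime so that ${\rm Tr}_1^m(\theta x)$ lives in $\Z/q\Z$ and a nonzero value forces $\zeta_q^{{\rm Tr}_1^m(\theta x_0)}\neq 1$, is guaranteed by the paper's standing assumption in Section~2.
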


\subsection{Linear codes with  the bivariate form}
Let $H$ be a subspace of ${\mathbb F}_{q^m}\times{\mathbb F}_{q^k}$. Similar to (\ref{eq:052201}), we define the dual of $H$ as follows,
\begin{equation}\label{eq:xy0511}
H^{\perp}=\left\{ (x,y)\in {\mathbb F}_{q^m}\times{\mathbb F}_{q^k}:{\rm Tr}_1^m(\alpha x)+{\rm Tr}_1^k(\beta y)=0\ {\rm for\ any}\ (\alpha,\beta)\in H \right\}.
\end{equation}
It is clear that $H^{\perp}$ is a subspace of ${\mathbb F}_{q^m}\times{\mathbb F}_{q^k}$.  Assume that $H$ is an $r$-dimensional $\bF_q$-subspace   of ${\mathbb F}_{q^m}\times{\mathbb F}_{q^k}$ and $\{(\alpha_1,\beta_1), (\alpha_2,\beta_2),\dots, (\alpha_r,\beta_r)\}$ is an ${\mathbb F}_q$-basis of $H$. Then
\begin{eqnarray*}
|H^{\perp}|&=&\sum_{x\in\bF_{q^m}}\sum_{y\in\bF_{q^k}}\prod_{i=1}^r\left(\sum_{v_i\in\bF_q}\frac{1}{q}\zeta_q^{v_i({\rm Tr}_1^m(\alpha_i x)+{\rm Tr}_1^k(\beta_i y))}\right)\\
                     &=&\frac{1}{q^r}\sum_{\left ( \alpha ,\beta  \right ) \in H}\sum_{x\in\bF_{q^m}}\sum_{y\in\bF_{q^k}}\zeta_q^{{\rm Tr}_1^m(\alpha x)+{\rm Tr}_1^k(\beta y)}\\
                     &=&q^{m+k-r}.
\end{eqnarray*}
Hence, we have ${\rm dim}_{\mathbb{F}_q}(H^{\perp})=m+k-r$. It should be noticed that ${\rm dim}_{\mathbb{F}_q}(H)+{\rm dim}_{\mathbb{F}_q}(H^{\perp})=m+k$ and $(H^{\perp})^{\perp}=H$. With an analysis similar to Lemma~\ref{lm:charactersum1}, we have the following result.
\begin{lemma}
\label{lm:charactersum101}
Let $H$ be a subspace of ${\mathbb F}_{q^m}\times\bF_{q^k}$ and $\zeta_q$ be a q-th primitive root of unity, then we have
\begin{equation*}
\sum_{( \alpha ,\beta ) \in H}\zeta_q^{{\rm Tr}_1^m(\alpha x)+{\rm Tr}_1^k(\beta y)}=
\begin{cases}
 0,& \text{$(x,y)\notin H^\perp$},
\\
|H|,& \text{$(x,y)\in H^\perp$}.
\end{cases}
\end{equation*}
\end{lemma}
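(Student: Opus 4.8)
The plan is to recognize the inner summand as an additive character of the finite abelian group $(H,+)$ and then invoke the standard orthogonality of characters, exactly mirroring the proof of Lemma~\ref{lm:charactersum1}. First I would fix $(x,y)\in{\mathbb F}_{q^m}\times{\mathbb F}_{q^k}$ and define $\chi_{(x,y)}\colon H\to{\mathbb C}^*$ by $\chi_{(x,y)}(\alpha,\beta)=\zeta_q^{{\rm Tr}_1^m(\alpha x)+{\rm Tr}_1^k(\beta y)}$. Since $q$ is prime we may identify ${\mathbb F}_q$ with ${\mathbb Z}/q{\mathbb Z}$, so that $\zeta_q$ raised to an element of ${\mathbb F}_q$ is well defined and $\zeta_q^{a}=1$ if and only if $a=0$ in ${\mathbb F}_q$. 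The sum in question is then $S:=\sum_{(\alpha,\beta)\in H}\chi_{(x,y)}(\alpha,\beta)$.

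Next I would verify that $\chi_{(x,y)}$ is a group homomorphism. Because both ${\rm Tr}_1^m$ and ${\rm Tr}_1^k$ are ${\mathbb F}_q$-linear, hence additive, we have ${\rm Tr}_1^m((\alpha_1+\alpha_2)x)+{\rm Tr}_1^k((\beta_1+\beta_2)y)=\big({\rm Tr}_1^m(\alpha_1 x)+{\rm Tr}_1^k(\beta_1 y)\big)+\big({\rm Tr}_1^m(\alpha_2 x)+{\rm Tr}_1^k(\beta_2 y)\big)$, and therefore $\chi_{(x,y)}((\alpha_1,\beta_1)+(\alpha_2,\beta_2))=\chi_{(x,y)}(\alpha_1,\beta_1)\,\chi_{(x,y)}(\alpha_2,\beta_2)$; thus $\chi_{(x,y)}$ is a character of $H$.

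I would then split into two cases according to the triviality of this character. If $(x,y)\in H^\perp$, then by the definition (\ref{eq:xy0511}) we have ${\rm Tr}_1^m(\alpha x)+{\rm Tr}_1^k(\beta y)=0$ for every $(\alpha,\beta)\in H$, so each summand equals $\zeta_q^0=1$ and $S=|H|$. If instead $(x,y)\notin H^\perp$, there exists $(\alpha_0,\beta_0)\in H$ with ${\rm Tr}_1^m(\alpha_0 x)+{\rm Tr}_1^k(\beta_0 y)\neq 0$, i.e.\ $\chi_{(x,y)}(\alpha_0,\beta_0)\neq 1$. The usual shift argument then applies: since adding $(\alpha_0,\beta_0)\in H$ permutes the elements of the subgroup $H$, we get $\chi_{(x,y)}(\alpha_0,\beta_0)\,S=\sum_{(\alpha,\beta)\in H}\chi_{(x,y)}\big((\alpha,\beta)+(\alpha_0,\beta_0)\big)=S$, whence $\big(\chi_{(x,y)}(\alpha_0,\beta_0)-1\big)S=0$ and therefore $S=0$.

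The argument is entirely routine, so I do not expect a serious obstacle; the only point requiring a little care is the equivalence between $\chi_{(x,y)}$ being the trivial character and the membership $(x,y)\in H^\perp$. This is precisely where the primality of $q$ enters, as it guarantees that a sum of trace values vanishes as an exponent of $\zeta_q$ exactly when it vanishes in ${\mathbb F}_q$, which is the defining condition of $H^\perp$ in (\ref{eq:xy0511}). Everything else reduces to the additivity of the trace maps together with the one-line orthogonality computation, so the proof can be kept very short.
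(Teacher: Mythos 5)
Your proof is correct and takes essentially the same route as the paper: the paper gives no independent argument for this lemma, stating only that it follows ``with an analysis similar to Lemma~\ref{lm:charactersum1}'' (itself cited from a textbook), and that analysis is precisely the character-sum orthogonality you spell out. Your case split on the triviality of $\chi_{(x,y)}$ and the translation-by-$(\alpha_0,\beta_0)$ shift argument, with the primality of $q$ ensuring $\zeta_q^a\neq 1$ for $a\neq 0$ in ${\mathbb F}_q$, is exactly the intended reasoning.
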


Let
\begin{equation}\label{eq:xy0522}
K=\left\{ (x,y) \in {\mathbb F}_{q^k}\times{\mathbb F}_{q^m} \, :\, ( {\rm Tr}_1^m(d_1x)+{\rm Tr}_1^k(d_2y) )_{(d_1,d_2) \in D} = {\bf 0} \right\}.
\end{equation}
Then, the $r$-th generalized Hamming weight of linear codes given in (\ref{linearcode2}) is determined in the following proposition. This is a generalization of Proposition~\ref{GHW}.
\begin{proposition}{\label{GHW2}}
Let $\C_{D}$ be a linear code defined in (\ref{linearcode2}) with dimension $m-{\rm dim}_{\bF_q} K$, where $K$ is given in (\ref{eq:xy0522}) and  $D\subset {\mathbb F}_{q^m}\times {\mathbb F}_{q^k}$. Then, for any integer $1\leq r\leq m-{\rm dim}_{\bF_q} K$, the $r$-th generalized Hamming weight of $\C_{D}$ is
\begin{equation}\label{eq:rghw}
d_r(\C_D) =n-{\rm max}\, \left\{ |D\cap H^\perp| : H \in [{\mathbb F}_{q^m}\times\bF_{q^k}, r]\,\, {\rm and} \,\, H\cap K=\{(0,0) \} \right\},
\end{equation}
where $[{\mathbb F}_{q^m}\times\bF_{q^k}, r]$ denotes the set of all $r$-dimensional subspaces of ${\mathbb F}_{q^m}\times\bF_{q^k}$.
\end{proposition}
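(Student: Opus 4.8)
The plan is to adapt the argument behind Proposition~\ref{GHW} to the bivariate evaluation map, the only new ingredient being that the message space is now ${\mathbb F}_{q^m}\times{\mathbb F}_{q^k}$ with the pairing fixed in (\ref{eq:xy0511}). Concretely, I would introduce the surjective ${\mathbb F}_q$-linear map $\phi\colon{\mathbb F}_{q^m}\times{\mathbb F}_{q^k}\to\C_D$ sending $(a,b)$ to the codeword $({\rm Tr}_1^m(ax)+{\rm Tr}_1^k(by))_{(x,y)\in D}$. By the very definition of $K$ in (\ref{eq:xy0522}), one has $\ker\phi=K$, so $\phi$ induces an isomorphism $({\mathbb F}_{q^m}\times{\mathbb F}_{q^k})/K\cong\C_D$, whence $\dim_{{\mathbb F}_q}\C_D=(m+k)-\dim_{{\mathbb F}_q}K$ and every subcode of $\C_D$ is the image under $\phi$ of some ${\mathbb F}_q$-subspace of ${\mathbb F}_{q^m}\times{\mathbb F}_{q^k}$ containing $K$. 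This mirrors exactly the univariate set-up, with $m+k$ playing the role of $m$.

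The second step is the support identity. Fix an $r$-dimensional subcode $H_c\le\C_D$ and write $H_c=\phi(H)$ for an ${\mathbb F}_q$-subspace $H\le{\mathbb F}_{q^m}\times{\mathbb F}_{q^k}$. A coordinate $(x,y)\in D$ fails to lie in ${\rm supp}(H_c)$ precisely when every codeword of $H_c$ vanishes there, i.e.\ when ${\rm Tr}_1^m(\alpha x)+{\rm Tr}_1^k(\beta y)=0$ for all $(\alpha,\beta)\in H$; by (\ref{eq:xy0511}) this says exactly $(x,y)\in H^\perp$. Hence $D\setminus{\rm supp}(H_c)=D\cap H^\perp$ and $|{\rm supp}(H_c)|=n-|D\cap H^\perp|$. (Equivalently, one reads membership in $H^\perp$ off the inner character sum of Lemma~\ref{lm:charactersum101}.) I would stress here that the quantity $|D\cap H^\perp|$ depends only on $H_c$ and not on the chosen lift $H$, since it counts exactly the coordinates of $D$ annihilated by all of $H_c$.

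The final step converts the minimization over subcodes into the maximization in (\ref{eq:rghw}). To obtain an honest $r$-dimensional subcode I need $\phi|_H$ injective, which holds iff $H\cap K=\{(0,0)\}$; conversely, given any $H_c\in[\C_D,r]$, its preimage $\phi^{-1}(H_c)$ has dimension $r+\dim_{{\mathbb F}_q}K$ and contains $K$, so choosing an ${\mathbb F}_q$-complement $H$ of $K$ inside $\phi^{-1}(H_c)$ yields an $r$-dimensional $H$ with $H\cap K=\{(0,0)\}$ and $\phi(H)=H_c$. Thus $H\mapsto\phi(H)$ is a value-preserving surjection from $\{H\in[{\mathbb F}_{q^m}\times{\mathbb F}_{q^k},r]:H\cap K=\{(0,0)\}\}$ onto $[\C_D,r]$, so minimizing $|{\rm supp}(H_c)|=n-|D\cap H^\perp|$ over $r$-dimensional subcodes coincides with maximizing $|D\cap H^\perp|$ over the admissible $H$; together with the duality $\dim_{{\mathbb F}_q}H+\dim_{{\mathbb F}_q}H^\perp=m+k$ and $(H^\perp)^\perp=H$ established before the statement, this yields (\ref{eq:rghw}). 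The one point I would write out most carefully — and the main obstacle — is precisely this lift/complement argument: verifying that the constraint $H\cap K=\{(0,0)\}$ singles out exactly the $r$-dimensional subcodes (neither collapsing dimension nor omitting any subcode), so that the two extremal problems genuinely agree.
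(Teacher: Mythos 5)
Your proposal is correct and takes essentially the same approach as the paper's proof: both use the evaluation map $\phi$ with $\ker\phi=K$ to identify $r$-dimensional subcodes of $\C_D$ with $r$-dimensional subspaces $H\subset{\mathbb F}_{q^m}\times{\mathbb F}_{q^k}$ satisfying $H\cap K=\{(0,0)\}$ (the paper lifts a basis of the subcode, you take a complement of $K$ in the preimage --- the same linear algebra), and both reduce the support count to $|{\rm supp}(\C(H))|=n-|D\cap H^\perp|$ (the paper via the character sum of Lemma~\ref{lm:charactersum101}, you directly from the definition of $H^\perp$). As a side remark, your dimension count $\dim_{{\mathbb F}_q}\C_D=(m+k)-\dim_{{\mathbb F}_q}K$ is the correct one; the ``$m-\dim_{{\mathbb F}_q}K$'' in the statement is evidently a typo carried over from the univariate Proposition~\ref{GHW}.
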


\begin{proof} By the definition of $\C_{D}$, it is clear that  the dimension of $\C_{D}$ is $m-{\rm dim}_{\bF_q} K$, we omit the details of the proof. In the following, we show the  $r$-th generalized Hamming weight of this code.

On one hand, let $L$ be any $r$-dimensional $\bF_q$-subspace of $\C_D$ and $\{v_1,\dots, v_r\}$ be an $\bF_q$-basis of $L$. By the definition of $\C_{D}$, we know that $v_i$ can be expressed as
 $$v_i=( {\rm Tr}_1^m(d_1\alpha_i)+{\rm Tr}_1^k(d_2\beta_i) )_{(d_1,d_2) \in D}$$
for $1\le i\le r $, where $(\alpha_i,\beta_i)\in \bF_{q^m}\times\bF_{q^k}$. Let $H_r$ be the linear span of the vectors $\{(\alpha_1,\beta_1),\dots,(\alpha_r,\beta_r)\}$. It is clear that $H_r$ is an $r$-dimensional $\bF_q$-subspace of ${\mathbb F}_{q^m}\times\bF_{q^k}$ and $ H_r\cap K=\{(0,0) \}$. By the definition of $L$ and $H_r$, we have
  $$L=\{( {\rm Tr}_1^m(d_1x)+{\rm Tr}_1^k(d_2y) )_{(d_1,d_2) \in D}\in\C_D:(x,y)\in H_r\}.$$
Since $L$ is any $r$-dimensional $\bF_q$-subspace of $\C_D$, we have
\begin{equation}\label{eq:CH}
\left\{ \C(H) : H \in [{\mathbb F}_{q^m}\times\bF_{q^k}, r]\,\, {\rm and} \,\, H\cap K=\{(0,0) \} \right\}\supset [\C_D,r],
\end{equation}
where
$$\C(H)=\{( {\rm Tr}_1^m(d_1x)+{\rm Tr}_1^k(d_2y) )_{(d_1,d_2) \in D}\in\C_D:(x,y)\in H\}.$$

On the other hand, we choose an $r$-dimensional $\bF_q$-subspace $H'_r$ of ${\mathbb F}_{q^m}\times\bF_{q^k}$ such that $H'_r\cap K=\{(0,0)\}$. Since $K$ is the kernel of the linear homomorphism
\begin{equation*}
\begin{split}
\phi:\,\,\,  &{\mathbb F}_{q^m}\times\bF_{q^k}  \longrightarrow  \C_D \\
   &     (x,y) \,\,\,\,\,\,\,\,\,\,\,\,\,\,\,\,\longmapsto ( {\rm Tr}_1^m(d_1x)+{\rm Tr}_1^k(d_2y) )_{(d_1,d_2) \in D}\mbox{,}
\end{split}
\end{equation*}
we have that $\phi|_{H_r'}$ is an injection. This implies that $\C(H'_r)=\phi(H'_r)$ is an $r$-dimensional $\bF_q$-subspace of $\C_D$. Thus, from (\ref{eq:CH}) we have
$$
\left\{ \C(H) : H \in [{\mathbb F}_{q^m}\times\bF_{q^k}, r]\,\, {\rm and} \,\, H\cap K=\{(0,0) \} \right\}= [\C_D,r].
$$
By the definition of generalized Hamming weights, we have
\begin{equation}\label{eqn:commonzero7}
    d_r(\C_D) ={\rm min}\, \left\{ |{\rm supp}(\C(H))| : H \in [{\mathbb F_{q^m}},r]\,\, {\rm and} \,\, H\cap K=\{(0,0) \} \right\}.
\end{equation}
For any $r$-dimensional $\bF_q$-subspace $H$ of ${\mathbb F}_{q^m}\times\bF_{q^k}$, we let $\{(\alpha_1,\beta_1), (\alpha_2,\beta_2),\dots, (\alpha_r,\beta_r)\}$ be an ${\mathbb F}_q$-basis of $H$. Then
\begin{equation}\label{eq:0527}
\begin{split}
|{\rm supp}(\C(H))|=&n-|\{1\le i\le n: c_i=0 {\rm\ for\ any\ } (c_1,\dots,c_n)\in\C(H)\}|\\
                     =&n-\sum_{(x,y)\in D}\prod_{i=1}^r\left(\sum_{v_i\in\bF_q}\frac{1}{q}\zeta_q^{v_i({\rm Tr}_1^m(\alpha_i x)+{\rm Tr}_1^k(\beta_i y))}\right)\\
                     =&n-\frac{1}{q^r}\sum_{(x,y)\in D}\sum_{\left ( \alpha ,\beta  \right ) \in H}\zeta_q^{{\rm Tr}_1^m(\alpha x)+{\rm Tr}_1^k(\beta y)}.
\end{split}
\end{equation}
For any $(x,y)\in \bF_{q^m}\times\bF_{q^k}$, we have
\begin{equation}\label{eqn:commonzero6}
\sum_{( \alpha ,\beta ) \in H}\zeta_q^{{\rm Tr}_1^m(\alpha x)+{\rm Tr}_1^k(\beta y)}=
\begin{cases}
 0,& \text{$(x,y)\notin H^\perp$},
\\
|H|,& \text{$(x,y)\in H^\perp$}.
\end{cases}
\end{equation}
Combining (\ref{eq:0527}) and (\ref{eqn:commonzero6}), we have
$$
|{\rm supp}(\C(H))|=n-|D\cap H^\perp|.
$$
From (\ref{eqn:commonzero7}), the $r$-th generalized Hamming weight of $\C_D$ is
$$
d_r(\C_D) =n-{\rm max}\, \left\{ |D\cap H^\perp| : H \in [{\mathbb F}_{q^m}\times\bF_{q^k}, r]\,\, {\rm and} \,\, H\cap K=\{(0,0) \} \right\}.
$$
 This completes the proof.
\end{proof}

\section{ The weight hierarchies of three classes of linear Codes}
In this section, we will give the weight hierarchy of three classes of linear codes. Our  method of proof is as follows.
By analysing the defining sets of linear codes, we can establish some bounds for their generalized Hamming weights. Subsequently, by constructing subspaces of the discussed codes that match these bounds, we determine their weight hierarchies. As will be seen later, the results for these cases are distinct. Hence, the proof should be treated separately.

\subsection{Linear codes from  the defining set $\bF_{q^m}\setminus\bigcup_{i=0}^{h} (\theta_i+\bF_{q^{k}})$}
In this subsection, we investigate a class of linear codes denoted as follows:
\begin{equation}
\label{eqn:linearcode1}
\C_D = \{\mathbf{c}(a)=({\rm Tr}_1^{m}(ax))_{x \in D}:a \in \bF_{q^m} \},
\end{equation}
where the defining set is given by
$D= \bF_{q^m}\setminus \Omega$ and $\Omega  = \bigcup_{i=0}^{h} (\theta_i+\bF_{q^{k}})$.
Here, we always assume that $1\le k< m $, $k\,|\,m$, $0\le h\le q-1$, $\theta_0 = 0$,  $\theta_i \in \bF_{q^m}^*$ and $\theta_i - \theta_j \notin \bF_{q^k}$ for $0\le j < i \le h$. Recall that
\begin{equation*}
H^\perp=\{x\in \bF_{q^m}: {\rm Tr}_1^m(xh)=0\  {\rm for\ all}\ h\in H\}
\end{equation*}
for any $\bF_q$-subspace $H$ of $\bF_{q^m}$. If $H=\bF_{q^k}$, it is clear that ${\rm Tr}_1^m(xh)={\rm Tr}_1^k(h{\rm Tr}_k^m(x))$ for any $h \in H$.
Then we have
$$
\bF_{q^k}^\perp=\{x\in \bF_{q^m}: {\rm Tr}_k^m(x)=0\}.
$$
We now show the following results, which is useful for us to get the weight hierarchy of $\C_D$.
\begin{lemma}
\label{lm:unemptyset}
Let the symbols be given as above. Then there exists an $x\in \bF_{q^k}^\perp$ such that ${\rm Tr}_1^m(\theta_i x)\ne 0$ for any $1\le i \le h$.
\end{lemma}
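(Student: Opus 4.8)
The plan is to argue by a union-bound (counting) argument inside the $\bF_q$-vector space $V := \bF_{q^k}^\perp = \{x \in \bF_{q^m} : \Tr_k^m(x) = 0\}$. Since $\Tr_k^m \colon \bF_{q^m} \to \bF_{q^k}$ is a surjective $\bF_q$-linear map, its kernel $V$ has $\bF_q$-dimension $m-k$, so $|V| = q^{m-k}$. For each $i$ with $1 \le i \le h$ I set $H_i := \{x \in V : \Tr_1^m(\theta_i x) = 0\}$; the assertion is then exactly that $V \setminus \bigcup_{i=1}^{h} H_i \neq \emptyset$, since an element of this set lies in $\bF_{q^k}^\perp$ and simultaneously avoids every kernel $\Tr_1^m(\theta_i x) = 0$.

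The crux is to show that each $H_i$ is a \emph{proper} $\bF_q$-hyperplane of $V$, i.e. $|H_i| = q^{m-k-1}$. The map $x \mapsto \Tr_1^m(\theta_i x)$ is an $\bF_q$-linear functional on $\bF_{q^m}$, so $H_i$ is the kernel of its restriction to $V$ and hence has codimension $0$ or $1$ in $V$. Codimension $0$ (that is, $H_i = V$) would mean $\Tr_1^m(\theta_i x) = 0$ for all $x \in V = \bF_{q^k}^\perp$, which by definition of the dual together with the identity $(\bF_{q^k}^\perp)^\perp = \bF_{q^k}$ forces $\theta_i \in \bF_{q^k}$. But taking $j = 0$ in the standing hypothesis gives $\theta_i - \theta_0 = \theta_i \notin \bF_{q^k}$, a contradiction. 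Thus each $H_i$ is a proper hyperplane of $V$ and $|H_i| = q^{m-k-1}$.

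The conclusion then follows from the hypothesis $h \le q-1$ by the union bound:
\[
\Big|\bigcup_{i=1}^{h} H_i\Big| \le \sum_{i=1}^{h} |H_i| = h\,q^{m-k-1} \le (q-1)\,q^{m-k-1} = q^{m-k} - q^{m-k-1} < q^{m-k} = |V|,
\]
so $V \setminus \bigcup_{i=1}^{h} H_i$ is nonempty and any of its elements does the job (the case $h=0$ being vacuous). I do not anticipate a genuine obstacle here: the only steps requiring care are the identification $V^\perp = \bF_{q^k}$ via $(H^\perp)^\perp = H$, and recognizing that ``$H_i$ proper'' is equivalent to the non-membership $\theta_i \notin \bF_{q^k}$ already supplied by the hypothesis. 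The constraint $h \le q-1$ is precisely what makes the union strictly smaller than $V$.
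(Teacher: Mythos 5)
Your proof is correct and follows essentially the same counting argument as the paper: both reduce the claim to showing that each set $\{x\in \bF_{q^k}^\perp : \Tr_1^m(\theta_i x)=0\}$ has exactly $q^{m-k-1}$ elements, and then use $h\le q-1$ to conclude that $h$ such sets cannot cover $\bF_{q^k}^\perp$ (the paper phrases this as a contradiction, you phrase it as a direct union bound). The only real difference is how that size is computed: the paper evaluates an explicit character sum, using $\theta_i\notin\bF_{q^k}$ to see that $y\theta_i+z=0$ forces $y=z=0$, whereas you obtain it more elementarily from rank--nullity together with the biduality $(\bF_{q^k}^\perp)^\perp=\bF_{q^k}$ --- the same hypothesis doing the same work in both cases.
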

\begin{proof}
Suppose that for any $x\in \bF_{q^k}^\perp$, there exists an integer~$i$ with $1\leq i\leq h$ such that ${\rm Tr}_1^m(\theta_i x)= 0$. Then we have $\bigcup_{i=1}^h\{x\in\bF_{q^k}^\perp:{\rm Tr}_1^m(\theta_i x)= 0\}\supset\bF_{q^k}^\perp$, which implies that
\begin{equation}\label{eq:050801}
\sum_{i=1}^h|\{x\in\bF_{q^k}^\perp:{\rm Tr}_1^m(\theta_i x)= 0\}|\ge |\bF_{q^k}^\perp|.
\end{equation}
By the definition of $\bF_{q^k}^\perp$, we obtain
\begin{equation}\label{eq:0508}
\begin{split}
\sum_{i=1}^h|\{x\in\bF_{q^k}^\perp\,:\,{\rm Tr}_1^m(\theta_i x)= 0\}|&=\sum_{i=1}^h|\{x\in\bF_{q^m}\,:\,{\rm Tr}_1^m(\theta_i x)= 0\,\, and\,\,{\rm Tr}_k^m(x)=0 \}|\\
&=\frac{1}{q^{k+1}}\sum_{i=1}^{h}\sum_{x\in\bF_{q^m}}\sum_{y \in \mathbb{F}_q}\zeta_q^{y {\rm Tr}_1^m(\theta_i x)}\sum_{z \in \mathbb{F}_{q^k}}\zeta_q^{{\rm Tr}_1^m(z x)}\\
&=\frac{1}{q^{k+1}}\sum_{i=1}^{h}\sum_{x\in\bF_{q^m}}\sum_{z \in \mathbb{F}_{q^k}}\zeta_q^{{\rm Tr}_1^m((y\theta_i+z) x)}.
\end{split}
\end{equation}
Since $\theta_i \notin \mathbb{F}_{q^k}$ for  $1\leq i\leq h$, we have $y\theta_i+z=0$ if and only if $y=z=0$. Then from (\ref{eq:0508}), it is clear that
\begin{equation}
\begin{split}
\sum_{i=1}^h|\{x\in\bF_{q^k}^\perp\,:\,{\rm Tr}_1^m(\theta_i x)= 0\}|=h\cdot q^{m-k-1}\ge q^{m-k}=|\bF_{q^k}^\perp|
\end{split}
\end{equation}
since $h\le q-1$, which is contradictory to (\ref{eq:050801}).
 This completes the proof.
\end{proof}

In the following, we give the weight hierarchy of the first class of linear codes.

\begin{theorem}\label{th:GHW1}
Let $\C_D$ be a linear code defined in (\ref{eqn:linearcode1}). Then $\C_D$ is a $[q^m-(h+1)q^k,m,q^m-(h+1)q^k-q^{m-1}+(h+1)q^{k-1}]$ code and has the following weight hierarchy:
$$ d_r(\C_D) =\begin{cases}
 q^m-(h+1)q^k - q^{m-r}+(h+1)q^{k-r},& \text{$1\le r\le k $},
\\
q^m-(h+1)q^k-q^{m-r}+1,& \text{$k<r \le m$}.
\end{cases}$$
\end{theorem}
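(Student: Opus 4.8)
The plan is to apply Proposition~\ref{GHW}. One first checks that $\C_D$ has length $n=|D|=q^m-(h+1)q^k$ (since $\Omega$ is a disjoint union of $h+1$ cosets of $\bF_{q^k}$, using $\theta_i-\theta_j\notin\bF_{q^k}$) and dimension $m$, i.e. $K=\{0\}$, so the side condition $H\cap K=\{0\}$ is automatic and may be dropped. Because $D=\bF_{q^m}\setminus\Omega$ and $H^\perp\subseteq\bF_{q^m}$ with $|H^\perp|=q^{m-r}$, we have $|D\cap H^\perp|=q^{m-r}-|\Omega\cap H^\perp|$, and Proposition~\ref{GHW} becomes
$$
d_r(\C_D)=n-q^{m-r}+\min_{H\in[\bF_{q^m},r]}\,|\Omega\cap H^\perp|.
$$
Thus the entire problem reduces to minimizing $|\Omega\cap H^\perp|$ over all $(m-r)$-dimensional subspaces $H^\perp$.

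The structural observation I would use is the following. Set $W=\bF_{q^k}\cap H^\perp$ and $s=\dim_{\bF_q}W$. Since the cosets $\theta_i+\bF_{q^k}$ are pairwise disjoint, each nonempty intersection $(\theta_i+\bF_{q^k})\cap H^\perp$ is a coset of $W$ of size $q^s$, and it is nonempty precisely when $\theta_i\in H^\perp+\bF_{q^k}$. Hence
$$
|\Omega\cap H^\perp|=q^{s}\,N,\qquad N=\bigl|\{\,0\le i\le h:\theta_i\in H^\perp+\bF_{q^k}\,\}\bigr|.
$$
The dimension formula gives $\dim(H^\perp+\bF_{q^k})=m-r+k-s$, so a small $s$ forces $H^\perp+\bF_{q^k}$ to be large and therefore to capture many $\theta_i$. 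This tradeoff between $s$ and $N$ is the crux of the argument. Note $1\le N\le h+1$ (the term $i=0$ always counts, as $0\in H^\perp+\bF_{q^k}$) and $s\ge\max(0,k-r)$.

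For $1\le r\le k$ I would prove the lower bound $|\Omega\cap H^\perp|\ge(h+1)q^{k-r}$ by splitting on $s$. If $s=k-r$, then $\dim(H^\perp+\bF_{q^k})=m$, so $H^\perp+\bF_{q^k}=\bF_{q^m}$ contains every $\theta_i$, forcing $N=h+1$ and $q^sN=(h+1)q^{k-r}$. If $s\ge k-r+1$, then from $N\ge1$ and the hypothesis $h+1\le q$ we get $q^sN\ge q^{k-r+1}\ge(h+1)q^{k-r}$. To attain the bound, I would choose $U\subseteq\bF_{q^k}$ of dimension $k-r$ and a complement $\bF_{q^m}=\bF_{q^k}\oplus C$, and set $H^\perp=U\oplus C$; this has dimension $m-r$, meets $\bF_{q^k}$ in $U$ (so $s=k-r$) and satisfies $H^\perp+\bF_{q^k}=\bF_{q^m}$ (so $N=h+1$), giving $|\Omega\cap H^\perp|=(h+1)q^{k-r}$. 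Substituting produces $d_r=q^m-(h+1)q^k-q^{m-r}+(h+1)q^{k-r}$, and $r=1$ yields the claimed minimum distance.

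For $k<r\le m$ the lower bound is merely $|\Omega\cap H^\perp|\ge1$, since $0\in\Omega\cap H^\perp$; the real task is to realize the value $1$, i.e. to build $H^\perp$ of dimension $m-r$ with $\Omega\cap H^\perp=\{0\}$. This is exactly where Lemma~\ref{lm:unemptyset} is tailor-made: it supplies $x_0\in\bF_{q^k}^\perp$ with $\Tr_1^m(\theta_i x_0)\ne0$ for all $1\le i\le h$. Since $x_0\in\bF_{q^k}^\perp$ means $\Tr_k^m(x_0)=0$, the form $v\mapsto\Tr_1^m(x_0 v)$ vanishes on $\bF_{q^k}$ (using $\Tr_1^m(x_0 v)=\Tr_1^k(v\,\Tr_k^m(x_0))$ for $v\in\bF_{q^k}$), so it descends to a nonzero linear form on $\bF_{q^m}/\bF_{q^k}$ whose kernel $V_0$ is a hyperplane of the quotient avoiding all images $\bar\theta_1,\dots,\bar\theta_h$. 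As $r>k$ gives $m-r\le\dim V_0=m-k-1$, I would take any $(m-r)$-dimensional $V\subseteq V_0$ and lift it through a linear section of $\bF_{q^m}\to\bF_{q^m}/\bF_{q^k}$ to obtain $H^\perp$ with $H^\perp\cap\bF_{q^k}=\{0\}$ and $H^\perp\cap(\theta_i+\bF_{q^k})=\emptyset$ for $i\ge1$, whence $\Omega\cap H^\perp=\{0\}$ and $d_r=q^m-(h+1)q^k-q^{m-r}+1$. The main obstacle I anticipate is the lower-bound analysis for $1\le r\le k$: correctly isolating the $s$-versus-$N$ tradeoff and exploiting $h+1\le q$ to see that the minimum is pinned at the boundary $s=k-r,\ N=h+1$, rather than at some interior configuration.
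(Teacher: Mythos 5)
Your proposal is correct, and it reaches the theorem by a more elementary route than the paper's. The paper evaluates $|D\cap H_r^{\perp}|$ with character sums (Lemma~\ref{lm:charactersum1}), reducing it to $q^{m-r}-q^{k-r}\sum_{i=0}^{h}N_i$ with $N_i=\sum_{\omega\in H_r\cap\bF_{q^k}^{\perp}}\zeta_q^{\Tr_1^m(\omega\theta_i)}$, and then constructs optimal subspaces $H_r$ on the primal side; you instead count directly, writing $\Omega\cap H^{\perp}$ as a disjoint union of $N$ cosets of $W=\bF_{q^k}\cap H^{\perp}$, so $|\Omega\cap H^{\perp}|=q^{s}N$, and you build the optimal $(m-r)$-dimensional space $H^{\perp}$ itself. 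The two formulations are dual to one another: since $(H_r\cap\bF_{q^k}^{\perp})^{\perp}=H_r^{\perp}+\bF_{q^k}$, the paper's criterion ``$N_i\neq 0$'' is exactly your criterion ``$\theta_i\in H^{\perp}+\bF_{q^k}$'' --- but your version needs no exponential sums. Both proofs invoke Lemma~\ref{lm:unemptyset} in the range $k<r\le m$, used differently: the paper places the special element $\alpha$ inside $H_r$ and arranges $\dim(H_r\cap\bF_{q^k}^{\perp})=r-k$, whereas you use $x_0$ to cut out a hyperplane of the quotient $\bF_{q^m}/\bF_{q^k}$ missing $\bar\theta_1,\dots,\bar\theta_h$ and lift an $(m-r)$-dimensional subspace of it through a section. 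A concrete payoff of your treatment: for $1\le r\le k$ you actually prove the upper bound $|D\cap H^{\perp}|\le q^{m-r}-(h+1)q^{k-r}$ via the $s$-versus-$N$ trade-off, which makes explicit where the hypothesis $h\le q-1$ enters; the paper only exhibits a subspace attaining this value and asserts maximality, leaving implicit the argument that a subspace with $H_r\cap\bF_{q^k}^{\perp}\neq\{0\}$ cannot do better (that is precisely your case $s\ge k-r+1$). One trivial repair: when $h=0$, Lemma~\ref{lm:unemptyset} permits $x_0=0$, so in that degenerate case choose any nonzero $x_0\in\bF_{q^k}^{\perp}$ (or simply any $(m-r)$-dimensional subspace of the quotient); the rest of your argument for $k<r\le m$ is unchanged.
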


\begin{proof}
From \cite{Hu2022}, we know that the code $\C_{D}$ has parameters $[q^m-(h+1)q^k,m,q^m-(h+1)q^k-q^{m-1}+(h+1)q^{k-1}]$. Then
 the subspace
$$
K=\left\{ x \in {\mathbb F_{q^m}}\, :\, ( {\rm Tr}_1^m(dx) )_{d \in D} = {\bf 0} \right\}
$$
is $\{0\}$. By Proposition \ref{GHW}, the $r$-th generalized Hamming weight of $\C_{D}$ is
\begin{equation*}
d_r(\C_D) =n-{\rm max}\, \left\{ |D\cap H^{\perp}| : H \in [{\mathbb F_{q^m}}, r] \right\},
\end{equation*}
where $1\leq r\leq m$. In the next, we determine the maximum value of $|D\cap H^{\perp}|$ for $H \in [{\mathbb F_{q^m}}, r]$.

Let $H_r$ be an $r$-dimensional subspace of $\bF_{q^m}$. Then from Lemma \ref{lm:charactersum1}, we know that $|D\cap H_r^{\perp}|$ can be expressed as follows:
\begin{equation}\label{eq:050802}
\begin{split}
|D\cap H_r^{\perp}|=&\frac{1}{q^r}\sum_{x \in D}\sum_{\omega \in H_r}\zeta_q^{{\rm Tr}_1^{m}(\omega x)}\\
=&\frac{1}{q^r}\sum_{x \in \bF_{q^m}}\sum_{\omega \in H_r}\zeta_q^{{\rm Tr}_1^{m}(\omega x)}-\frac{1}{q^r}\sum_{x \in \bigcup_{i=0}^h(\theta_i+ \bF_{q^k})}\sum_{\omega \in H_r}\zeta_q^{{\rm Tr}_1^{m}(\omega x)}\\
=& q^{m-r} - \frac{1}{q^r} \sum_{i=0}^h \sum_{\omega \in H_r}\zeta_q^{{\rm Tr}_1^{m}(\omega \theta_i)} \sum_{x \in \bF_{q^k}}\zeta_q^{{\rm Tr}_1^{m}(\omega x)}.
\end{split}
\end{equation}
It is clear that
$$
\sum_{x \in \bF_{q^k}}\zeta_q^{{\rm Tr}_1^{m}(\omega x)}=\sum_{x\in\bF_{q^{k}}}\zeta_q^{{\rm Tr}_1^k(x{\rm Tr}_k^m(\omega ))}=\begin{cases}
q^k,& \text{$\omega \in \bF_{q^k}^ \perp $},
\\
0,& \text{$\omega \notin \bF_{q^k}^ \perp $}.
\end{cases}
$$
Then from (\ref{eq:050802}) we have
\begin{equation}\label{eq:050803}
|D\cap H_r^{\perp}|= q^{m-r} - q^{k-r} \sum_{i=0}^h \sum_{\omega \in H_r \cap \bF_{q^k}^ \perp}\zeta_q^{{\rm Tr}_1^{m}(\omega \theta_i)}=q^{m-r} - q^{k-r} \sum_{i=0}^hN_i,
\end{equation}
where $N_i=\sum_{\omega \in H_r \cap \bF_{q^k}^ \perp}\zeta_q^{{\rm Tr}_1^{m}(\omega \theta_i)}$.
From Lemma \ref{lm:charactersum1} we obtain
\begin{equation}\label{eq:050804}
N_i=
\begin{cases}
0,& \text{$\theta_i\notin(H_r\cap \bF_{q^k}^ \perp)^\perp$},\\
|H_r\cap \bF_{q^k}^ \perp|,& \text{$\theta_i\in(H_r\cap \bF_{q^k}^ \perp)^\perp$}
\end{cases}
\end{equation}
for $0\le i\le h$. We now show the maximum value of $|D\cap H_r^{\perp}|$ for $1\le r\le m$ from the following two cases.

\noindent {\bf Case 1:} $1\le r\le k$. Since ${\rm dim}(\bF_{q^k}^ \perp)=m-k$ and $m-k+r\le m$, there exists an $r$-dimensional subspace $H_r$ of $\bF_{q^m}$ such that $H_r\cap \bF_{q^k}^ \perp=\{0\}$, which implies that $(H_r\cap \bF_{q^k}^ \perp)^\perp=\mathbb{F}_{q^m}$. Hence, $\theta_i \in (H_r\cap \bF_{q^k}^ \perp)^\perp$ for $0\leq i\leq h$. From (\ref{eq:050803}) and (\ref{eq:050804}), we have
\begin{equation}
\label{equ:Max1}
\mathop{\rm max}\, \left\{|D\cap H_r^{\perp}| \right\}=q^{m-r} - (h+1)q^{k-r}.
\end{equation}

\noindent {\bf Case 2:} $k<r \le m$. It is clear that $\theta_0=0\in(H_r\cap \bF_{q^k}^ \perp)^\perp$ and ${\rm dim}(H_r\cap\bF_{q^k}^\perp)\ge  r+(m-k)-m=r-k$. Then from (\ref{eq:050803}) and (\ref{eq:050804}), we obtain
\begin{equation}\label{eq:0511}
\mathop{\rm max}\, \left\{|D\cap H_r^{\perp}| \right\}=q^{m-r} - q^{k-r} \sum_{i=1}^hN_i-q^{k-r}N_0\le q^{m-r} - q^{k-r}|H_r\cap \bF_{q^k}^ \perp|\le q^{m-r}-1.
\end{equation}
From Lemma \ref{lm:unemptyset}, we know that there exists an $\alpha\in \bF_{q^m}$ such that $\alpha$ belongs to $\bF_{q^k}^ \perp$ and ${\rm Tr}_1^m(\theta_i\alpha)\ne 0$ for $1\le i\le h$.
It is obvious that there exists an $r$-dimensional subspace $H_r$ of $\bF_{q^m}$ such that $\alpha\in H_r$ and ${\rm dim}(H_r\cap\bF_{q^k}^\perp)=r-k$, then
$\alpha \in  \bF_{q^k}^ \perp\cap H_r$. From ${\rm Tr}_1^m(\theta_i\alpha)\ne 0$, we have $\theta_i\notin(H_r\cap \bF_{q^k}^ \perp)^\perp$ for $1\le i\le h$, which means that $N_i=0$ for any $1\le i\le h$.
Hence, the equality holds in (\ref{eq:0511}).

From the above two cases, the desired conclusion then follows.
\end{proof}

\begin{remark} In this section, we calculate the generalized Hamming weight of $\C_D$ in (\ref{eqn:linearcode1}) with the case $h\le q-1$. When $h$ is larger than $q$, the Hamming distance of $\C_D$ is influenced by the choice of $\theta_i$, and the generalized Hamming weight becomes more complex. Readers can refer to \cite{Hu2022} for results on the Hamming distance of $\C_D$ under specific situations when $h\ge q$. \end{remark}
\begin{example}
Let q=2, m=4, k=2, h=1. Magma experiments show that $\C_D$ is a [54,4,36] linear code, and the weight hierarchy of $\C_D$ is
$\{ wt_1=36,wt_2=48,wt_3=52,wt_4=54\}$, which is consistent with result in Theorem \ref{th:GHW1}.
\end{example}
\begin{example}
Let q=3, m=3, k=1, h=2. Magma experiments show that $\C_D$ is a [18,3,12] linear code, and the weight hierarchy of $\C_D$ is
$\{ wt_1=12,wt_2=16,wt_3=18\}$, which is consistent with result in Theorem \ref{th:GHW1}.
\end{example}

\subsection{Linear codes from the defining-set $(\bF_{q^m}\setminus \bF_{q^s})\times(\bF_{q^k}\setminus \bF_{q^\ell})$}

In this subsection, we determine the generalized Hamming weight  of linear code
\begin{equation}\label{eq:xy051104}
\C_D = \left\{ ({\rm Tr}_1^{m}(ax)+{\rm Tr}_1^{k}(by))_{(x,y) \in D}:a \in \bF_{q^m},b\in \bF_{q^k} \right\}
\end{equation}
with defining set
\begin{equation}\label{eq:xy051101}
D= \left\{ (x,y): x \in \bF_{q^m}\setminus \bF_{q^s}, y \in \bF_{q^k}\setminus \bF_{q^{\ell} } \right\},
\end{equation}
where $0<s<m$, $0<\ell<k$, $s|m$, $\ell \, |\, k$, $k-\ell\le m-s$ and $q^{m-s}>q^{m+\ell-k-s}+1$. It is clear that $\bF_{q^s}$ and $\bF_{q^{\ell}}$ are the subfields of $\bF_{q^m}$ and $\bF_{q^k}$, respectively. Let $H_r$ be a $r$-dimensional subspace of ${\mathbb F}_{q^m}\times{\mathbb F}_{q^k}$, where $1\leq r\leq m+k$. Recall that
\begin{equation}\label{eq:xy0511}
H^{\perp}_r=\left\{ (x,y)\in {\mathbb F}_{q^m}\times{\mathbb F}_{q^k}:{\rm Tr}_1^m(\alpha x)+{\rm Tr}_1^k(\beta y)=0\ {\rm for\ any}\ (\alpha,\beta)\in H_r \right\}.
\end{equation}
From Section 2,
we know that $H^{\perp}_r$ is an $(m+k-r)$-dimensional subspace of ${\mathbb F}_{q^m}\times{\mathbb F}_{q^k}$.
Then we have the following result.
\begin{lemma}
\label{prop:Commonzeros}
Let $D$ and $H_r^{\perp}$ be defined in (\ref{eq:xy051101}) and (\ref{eq:xy0511}). Then
$$|D\cap H_r^{\perp}|\le q^{m+k-r}-{\rm max}\{1, q^{m+\ell-r}\}.$$
\end{lemma}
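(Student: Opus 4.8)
The plan is to sidestep the exponential-sum bookkeeping of the previous subsection and argue directly inside the dual space $H_r^\perp$, whose dimension is already known to be $m+k-r$ (Section~2). First I would rewrite $|D\cap H_r^\perp|$ by complementation. Since $(x,y)\in D$ means precisely $x\notin\bF_{q^s}$ and $y\notin\bF_{q^\ell}$, a point of $H_r^\perp$ fails to lie in $D$ exactly when $x\in\bF_{q^s}$ or $y\in\bF_{q^\ell}$. Setting $P=H_r^\perp\cap(\bF_{q^s}\times\bF_{q^k})$ and $Q=H_r^\perp\cap(\bF_{q^m}\times\bF_{q^\ell})$, both of which are $\bF_q$-subspaces of $\bF_{q^m}\times\bF_{q^k}$, this gives
\[
|D\cap H_r^\perp|=|H_r^\perp|-|P\cup Q|=q^{m+k-r}-|P\cup Q|.
\]
(Equivalently one may run inclusion--exclusion through the character identity of Lemma~\ref{lm:charactersum101}, where $|P|+|Q|-|P\cap Q|$ appears with $P\cap Q=H_r^\perp\cap(\bF_{q^s}\times\bF_{q^\ell})$; the geometric form above is shorter.) The whole lemma thus reduces to the single lower bound $|P\cup Q|\ge\max\{1,q^{m+\ell-r}\}$.

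Next I would treat the two parts of the maximum separately. The bound $|P\cup Q|\ge 1$ is immediate, since $(0,0)$ lies in both $P$ and $Q$, so $P\cup Q$ is never empty; this already settles the inequality in the range where $q^{m+\ell-r}\le 1$, that is $r\ge m+\ell$. For the remaining range I would use $|P\cup Q|\ge|Q|$ together with the standard intersection inequality inside $\bF_{q^m}\times\bF_{q^k}$: because ${\rm dim}_{\bF_q}H_r^\perp=m+k-r$ and ${\rm dim}_{\bF_q}(\bF_{q^m}\times\bF_{q^\ell})=m+\ell$,
\[
{\rm dim}_{\bF_q}Q\ \ge\ (m+k-r)+(m+\ell)-(m+k)\ =\ m+\ell-r .
\]
Hence whenever $r\le m+\ell$ we get $|Q|=q^{{\rm dim}_{\bF_q}Q}\ge q^{m+\ell-r}$, so $|P\cup Q|\ge q^{m+\ell-r}$. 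Combining the two parts yields $|P\cup Q|\ge\max\{1,q^{m+\ell-r}\}$ in all cases, which is exactly what is required.

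I do not anticipate a serious obstacle; once the complementation $|D\cap H_r^\perp|=q^{m+k-r}-|P\cup Q|$ is in place the argument is essentially bookkeeping on subspace dimensions. The one point demanding care is that the intersection inequality only produces a useful estimate on $|Q|$ when $m+\ell-r\ge 0$, which is precisely why the statement is phrased with a maximum against $1$; I would therefore make the case split on the sign of $m+\ell-r$ explicit. It is also worth noting that the companion subspace $P$ satisfies ${\rm dim}_{\bF_q}P\ge s+k-r$, and since the hypothesis $k-\ell\le m-s$ is equivalent to $s+k\le m+\ell$, the estimate coming from $Q$ is always the stronger one; this explains why the bound features $q^{m+\ell-r}$ rather than $q^{s+k-r}$, even though only the $Q$-bound is actually needed. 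Finally, the extra hypothesis $q^{m-s}>q^{m+\ell-k-s}+1$ plays no role in this upper bound and will instead be used later, when constructing subspaces that attain it.
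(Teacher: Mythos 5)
Your proof is correct and follows essentially the same route as the paper: both arguments rest on the two observations that $(0,0)\in H_r^\perp\setminus D$ and that $D$ is disjoint from $\bF_{q^m}\times\bF_{q^\ell}$, combined with the dimension inequality $\dim\bigl(H_r^\perp\cap(\bF_{q^m}\times\bF_{q^\ell})\bigr)\ge m+\ell-r$. The only cosmetic difference is that you package the complement of $D$ in $H_r^\perp$ as an exact union $P\cup Q$ and lower-bound it, while the paper states the same two bounds directly as set containments (and never introduces $P$, which, as you note, is not needed).
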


\begin{proof}
 By the definition of $D$ and $H_r^{\perp}$, we have
$|D\cap H_r^{\perp}|\subset H_r^{\perp}\backslash\{(0,0)\},$
which implies that $|D\cap H_r^{\perp}|\le q^{m+k-r}-1$.

 When $1\le r \le m+\ell$, we have that
\begin{equation}\label{eq:051102}
\begin{split}
{\rm dim}(({\mathbb F}_{q^m}\times{\mathbb F}_{q^\ell})\cap H_r^{\perp})&\ge{\rm dim}({\mathbb F}_{q^m}\times{\mathbb F}_{q^\ell})+{\rm dim}(H_r^{\perp})-{\rm dim}({\mathbb F}_{q^m}\times{\mathbb F}_{q^k})\\
&= m+\ell+(m+k-r)-(m+k)\\
&=m+\ell-r.
\end{split}
\end{equation}
By the definition of $D$, it is clear that $D \cap({\mathbb F}_{q^m}\times{\mathbb F}_{q^{\ell}})=\{ (0,0) \}$. Then
$D\cap H_r^{\perp}\subset H_r^{\perp}\backslash(({\mathbb F}_{q^m}\times{\mathbb F}_{q^{\ell}})\cap H_r^{\perp})$.
Hence, from (\ref{eq:051102}) we have
$$
|D\cap H_r^{\perp}|\le |H_r^{\perp}|-|({\mathbb F}_{q^m}\times{\mathbb F}_{q^{\ell}})\cap H_r^{\perp}|\le q^{m+k-r}-q^{m+\ell-r}.
$$
This completes the proof.
\end{proof}

Next,  we determine the weight hierarchy of the second class of linear codes.

\begin{theorem} \label{th:GHW2}
Let $\C_D$ be a linear code defined in (\ref{eq:xy051104})
 with the defining set in (\ref{eq:xy051101}).
Then $\C_D$ is a $[(q^m-q^s)(q^k-q^{\ell}),m+k,(1-\frac{1}{q})(q^{m+k}-q^{m+\ell}-q^{k+s})]$ code  and has the following weight hierarchy: $$d_r(\C_D) =\begin{cases}
 (q^m-q^s)(q^k-q^{\ell})-q^{m+k-r}+q^{m+\ell-r}+q^{k+s-r}-q^{s+\ell},& \text{$1\le r\le k-\ell $},
\\
 (q^m-q^s)(q^k-q^\ell)-q^{m+k-r}+q^{m+\ell-r},& \text{$k-\ell<r \le m+\ell$},
\\
 (q^m-q^s)(q^k-q^\ell)-q^{m+k-r}+1,& \text{$m+\ell< r\le m+k$}.
\end{cases}$$
\end{theorem}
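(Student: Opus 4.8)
The plan is to apply Proposition \ref{GHW2}. First I would invoke \cite{Hu2021} to record that $\C_D$ has length $n=(q^m-q^s)(q^k-q^\ell)$ and dimension $m+k$; fullness of the dimension forces the subspace $K$ of (\ref{eq:xy0522}) to be $\{(0,0)\}$, so the constraint $H\cap K=\{(0,0)\}$ in Proposition \ref{GHW2} is vacuous and
$$d_r(\C_D)=n-\max\left\{|D\cap H^\perp|:H\in[\bF_{q^m}\times\bF_{q^k},r]\right\}.$$
Since $H\mapsto H^\perp$ is a bijection on subspaces, I would replace $H^\perp$ by an arbitrary subspace $W$ of dimension $t:=m+k-r$ and study $\max_W|D\cap W|$. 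Writing $A=\bF_{q^s}\times\bF_{q^k}$ and $B=\bF_{q^m}\times\bF_{q^\ell}$, so that $A\cap B=\bF_{q^s}\times\bF_{q^\ell}$ and $D=(\bF_{q^m}\times\bF_{q^k})\setminus(A\cup B)$, inclusion--exclusion gives the exact identity
$$|D\cap W|=q^{t}-|A\cap W|-|B\cap W|+|(A\cap B)\cap W|,$$
which is the workhorse of the whole argument.

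For the upper bounds I would split at $r=k-\ell$. When $r>k-\ell$, Lemma \ref{prop:Commonzeros} already gives $|D\cap W|\le q^{m+k-r}-\max\{1,q^{m+\ell-r}\}$, which is exactly $q^{m+k-r}-q^{m+\ell-r}$ for $k-\ell<r\le m+\ell$ and $q^{m+k-r}-1$ for $m+\ell<r\le m+k$. For $1\le r\le k-\ell$ the bound of Lemma \ref{prop:Commonzeros} is not tight, and I would instead feed the three standard estimates $\dim(A\cap W)\ge k+s-r$, $\dim(B\cap W)\ge m+\ell-r$ and $|(A\cap B)\cap W|\le q^{s+\ell}$ (the last because $(A\cap B)\cap W\subseteq A\cap B$) into the identity to obtain
$$|D\cap W|\le q^{m+k-r}-q^{k+s-r}-q^{m+\ell-r}+q^{s+\ell};$$
here one checks $k+s-r>0$ and $m+\ell-r>0$ on this range using $k-\ell\le m-s$, so the exponents are genuine.

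The heart of the proof is producing, for each range, a subspace $W$ meeting its bound. I would fix an $\bF_q$-basis $e_1,\dots,e_m$ of $\bF_{q^m}$ with $\bF_{q^s}=\langle e_1,\dots,e_s\rangle$ and an $\bF_q$-basis $f_1,\dots,f_k$ of $\bF_{q^k}$ with $\bF_{q^\ell}=\langle f_1,\dots,f_\ell\rangle$, and use the diagonal vectors $v_j=(e_{s+j},f_{\ell+j})$ for $1\le j\le k-\ell$; these are available precisely because $k-\ell\le m-s$, and their defining feature is that $\sum c_j v_j$ lies in $A$ (resp. in $B$) only when all $c_j=0$. For $m+\ell<r\le m+k$ (so $t\le k-\ell-1$) I would take $W=\langle v_1,\dots,v_t\rangle$, whence $W\cap A=W\cap B=\{(0,0)\}$ and $|D\cap W|=q^t-1$. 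For $k-\ell<r\le m+\ell$ I would take $W\supseteq\langle v_1,\dots,v_{k-\ell}\rangle$, completed to dimension $t$ inside $\langle v_1,\dots,v_{k-\ell}\rangle+\langle(e_{s+k-\ell+1},0),\dots,(e_m,0)\rangle+(A\cap B)$; then $W$ still surjects onto $\bF_{q^k}/\bF_{q^\ell}$ and every element of $A\cap W$ lies in $B$, so the $|A\cap W|$ and $|(A\cap B)\cap W|$ terms cancel and $|D\cap W|=q^t-q^{m+\ell-r}$. For $1\le r\le k-\ell$ I would build $W$ containing $A\cap B$ and all $v_j$, add the remaining $x$-directions $(e_{s+k-\ell+i},0)$, and then adjoin $t-(m+\ell)$ vectors $(0,f_{\ell+1}),\dots$ to reach dimension $t$; the key bookkeeping is that adjoining $(0,f_{\ell+j})$ simultaneously enlarges $A\cap W$ (by that vector) and $B\cap W$ (by $v_j-(0,f_{\ell+j})=(e_{s+j},0)$), so that $\dim(A\cap W)=k+s-r$, $\dim(B\cap W)=m+\ell-r$ and $A\cap B\subseteq W$ all hold at once, giving equality in the first-range bound.

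The main obstacle I anticipate is this last construction: one must simultaneously push $\dim(A\cap W)$ and $\dim(B\cap W)$ down to their minimal values while keeping $A\cap B$ entirely inside $W$, and verify that the interaction between the diagonal vectors $v_j$ and the adjoined vectors $(0,f_{\ell+j})$ lands the two intersection dimensions on the nose rather than overshooting. Once the matching constructions are in place, substituting $\max_W|D\cap W|$ into $d_r=n-\max_W|D\cap W|$ yields the three stated formulas; I would finish by checking that they agree at the breakpoints $r=k-\ell$ and $r=m+\ell$ (the corrections $q^{k+s-r}-q^{s+\ell}$ and $q^{m+\ell-r}-1$ vanish there), confirming the hierarchy is well defined.
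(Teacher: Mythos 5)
Your proposal is correct, and it reaches the stated hierarchy by a genuinely different route from the paper. The paper never passes to $W=H^{\perp}$: it keeps the $r$-dimensional space $H_r$ on the primal side, expands $|D\cap H_r^{\perp}|$ as an exponential sum (its equations (\ref{eqn:commonzerosbit})--(\ref{eqn:commonzeros2})), and reduces everything to the intersection numbers $|H_r\cap\mathcal{D}_i|$, where $\mathcal{D}_1,\mathcal{D}_2,\mathcal{D}_3$ are built from the trace-dual subspaces $\bF_{q^s}^{\perp}\subset\bF_{q^m}$ and $\bF_{q^\ell}^{\perp}\subset\bF_{q^k}$; its extremal subspaces are then primal-side spans such as ${\rm span}\{(u_1,v_1),\dots,(u_r,v_r)\}$ with $u_i\in\bF_{q^s}^{\perp}$, $v_i\in\bF_{q^\ell}^{\perp}$, augmented by vectors $(0,v_j)$ and $(u_i,0)$ in the higher ranges. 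You instead work entirely on the dual side, which is legitimate since $\dim H^{\perp}=m+k-r$ and $(H^{\perp})^{\perp}=H$ make $H\mapsto H^{\perp}$ a bijection between dimensions $r$ and $m+k-r$. Your exact inclusion--exclusion identity $|D\cap W|=q^{t}-|A\cap W|-|B\cap W|+|(A\cap B)\cap W|$ with $A=\bF_{q^s}\times\bF_{q^k}$, $B=\bF_{q^m}\times\bF_{q^\ell}$ replaces the character-sum computation; your range-one bound via $\dim(A\cap W)\ge k+s-r$, $\dim(B\cap W)\ge m+\ell-r$, $|(A\cap B)\cap W|\le q^{s+\ell}$ gives exactly the same numerical bound the paper extracts from $|H_r\cap\mathcal{D}_3|\le q^r-1$, and in the other two ranges you and the paper both invoke Lemma \ref{prop:Commonzeros}. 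I checked your three dual-side constructions, including the delicate range-one bookkeeping where $\dim(A\cap W)=k+s-r$, $\dim(B\cap W)=m+\ell-r$ and $A\cap B\subseteq W$ must hold simultaneously, and they are sound. What your version buys is elementarity and transparency: apart from citing Proposition \ref{GHW2}, no exponential sums are needed, and one sees directly which intersections are being tuned to extremality. What the paper's version buys is uniformity: the same character-sum template carries over to its other two theorems (in particular the butterfly defining set of Theorem \ref{thm:GHW3}), where the defining set is not the complement of a union of subspace products and subspace-level inclusion--exclusion would not apply.
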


\begin{proof}
From \cite{Hu2021}, we know that the code $\C_D$ has parameters $[(q^m-q^s)(q^k-q^{\ell}),m+k,(1-\frac{1}{q})(q^{m+k}-q^{m+\ell}-q^{k+s})]$. Then
 the subspace
$$K={\left\{ (x,y) \in {\mathbb F}_{q^m}^2 \, :\, ( {\rm Tr}_1^m(d_1x+d_2y) )_{(d_1,d_2) \in D} = {\bf 0} \right\}}$$
is $\{(0,0)\}$.
From  Proposition \ref{GHW2}, it follows that
$$
d_r(\C_D) =n-{\rm max}\, \left\{ |D\cap H^\perp| : H \in [{\mathbb F}_{q^m}\times\bF_{q^k}, r] \right\}.
$$
We now determine the maximum value of $|D\cap H^{\perp}|$ for $H \in [{\mathbb F}_{q^m}\times\bF_{q^k}, r]$.
Let $H_r$ be an $r$-dimensional subspace of ${\mathbb F}_{q^m}\times{\mathbb F}_{q^k}$.  Then
\begin{equation}\label{eqn:commonzerosbit}
\begin{split}
|D\cap H_r^{\perp}| =& \frac{1}{q^r}\sum_{\left ( \alpha ,\beta  \right ) \in H_r}\sum_{x\in\bF_{q^m}\backslash\bF_{q^s}}\sum_{y\in\bF_{q^k}\backslash\bF_{q^{\ell}}}\zeta_q^{({\rm Tr}_1^m(\alpha x)+{\rm Tr}_1^k(\beta y))}\\
                     =&\frac{1}{q^r} \sum_{\left ( \alpha ,\beta  \right ) \in H_r}N(\alpha,\beta),
\end{split}
\end{equation}
where
$$N(\alpha,\beta)=\left ( \sum_{x\in \bF_{q^m} }\zeta
_q^{{\rm Tr}^m_1\left ( \alpha x \right ) } -  \sum_{x\in \bF_{q^s} }\zeta
_q^{{\rm Tr}^m_1\left ( \alpha x \right ) }  \right )
\left ( \sum_{x\in \bF_{q^k} }\zeta
_q^{{\rm Tr}^k_1\left ( \beta x \right ) } -  \sum_{x\in \bF_{q^{\ell}} }\zeta
_q^{{\rm Tr}^k_1\left ( \beta x \right ) } \right ).$$
Let $\mathcal{D}_1=\{(\alpha,0)\in \bF_{q^m}\times\bF_{q^k}:\alpha\in\bF_{q^s}^\perp\backslash\{0\}\}$, $\mathcal{D}_2=\{(0,\beta)\in \bF_{q^m}\times\bF_{q^k}: \beta\in\bF_{q^\ell}^\perp\backslash\{0\}\}$ and $\mathcal{D}_3=\{(\alpha,\beta)\in \bF_{q^m}\times\bF_{q^k}:\alpha\in\bF_{q^s}^\perp\backslash\{0\}, \beta\in\bF_{q^\ell}^\perp\backslash\{0\}\}$, where
$$
\bF_{q^s}^\perp=\{x\in \bF_{q^m}: {\rm Tr}_s^m(x)=0\}\,\,\text{and}\,\,\bF_{q^\ell}^\perp=\{x\in \bF_{q^k}: {\rm Tr}_\ell^k(x)=0\}.
$$
 It is easy to check that
\begin{eqnarray}
N(\alpha,\beta)
&=&
\begin{cases}
 (q^m - q^s)(q^k- q^{\ell}),& \text{$\alpha = 0, \beta = 0 $},
\\
-(q^k- q^\ell)q^s,& \text{$(\alpha,\beta) \in H_r\cap \mathcal{D}_1$},
\\
-(q^m - q^s)q^{\ell},& \text{$(\alpha,\beta) \in H_r\cap \mathcal{D}_2$},
\\
q^{s+\ell},&\text{$(\alpha,\beta) \in H_r\cap \mathcal{D}_3$},
\\
0,& \text{others.}
\end{cases}\label{eqn:Nalphabeta}
\end{eqnarray}
 According to
(\ref{eqn:commonzerosbit}) and (\ref{eqn:Nalphabeta}), we have
\begin{equation}\label{eqn:commonzeros2}
\begin{split}
&|D\cap H_r^{\perp}|\\
&=\frac{1}{q^r}\bigg((q^m-q^s)(q^k-q^\ell)-(q^k-q^\ell)q^s | H_r\cap\mathcal{D}_1|-(q^m-q^s)q^\ell|H_r\cap\mathcal{D}_2|+q^{s+\ell}|H_r\cap\mathcal{D}_3|\bigg).
\end{split}
\end{equation}
Let
\begin{equation}\label{eq:051501}
\{u_1,u_2,\dots,u_m\}\,\,\text{and}\,\,\{v_1,v_2,\dots,v_k\}
\end{equation}  be an ${\mathbb F}_q$-basis of ${\mathbb F}_{q^m}$ and
an ${\mathbb F}_q$-basis of ${\mathbb F}_{q^k}$ such that
\begin{equation}\label{eq:051502}
\{u_1,\dots,u_{m-s}\} \,\,\text{and}\,\, \{v_1,\dots,v_{k-\ell}\}
\end{equation}
form an ${\mathbb F}_q$-basis of $\bF_{q^s}^\perp$ and  an ${\mathbb F}_q$-basis of $\bF_{q^\ell}^\perp$, respectively. There are three cases for discussion.

\noindent {\bf Case 1:} $1\le r\le k-\ell$. Since $|H_r|=q^r$ and $(0,0) \notin \mathcal{D}_3$, we know that $|H_r\cap\mathcal{D}_3|\leq q^r-1$. Then from (\ref{eqn:commonzeros2}), we have
\begin{equation}\label{eq:051503}
|D\cap H_r^{\perp}|\le \frac{1}{q^r}\left((q^m-q^s)(q^k-q^\ell)+q^{s+\ell}(q^r-1)\right).
\end{equation}
Since $m-s\geq k-\ell$, we have $r \leq {\rm min}\{k-\ell,m-s\}$. It is clear that $\{(u_1,v_1),(u_2,v_2),\dots,(u_r,v_r)\}$ is $\mathbb{F}_q$-linear independent in ${\mathbb F}_{q^m}\times\bF_{q^k}$, where $u_i$ and $v_i$ are in (\ref{eq:051502}) for $1\leq i\leq r$. Let
\begin{equation}\label{eq:hr}
H_r={\rm span}\{(u_1,v_1),(u_2,v_2),\dots,(u_r,v_r)\}.
 \end{equation}
One has $|H_r\cap\mathcal{D}_1|=|H_r\cap\mathcal{D}_2|=0$ and $|H_r\cap\mathcal{D}_3|=q^r-1$ since $H_r\backslash \{(0,0)\}\subset \mathcal{D}_3$. Then  from (\ref{eqn:commonzeros2}) we have
\begin{equation*}
|D\cap H_r^{\perp}|= \frac{1}{q^r}\left((q^m-q^s)(q^k-q^\ell)+q^{s+\ell}(q^r-1)\right).
\end{equation*}
So, $H_r$ given in (\ref{eq:hr}) is such that the equality in (\ref{eq:051503}) holds. Hence, the $r$-th generalized Hamming weight of $\C_D$ is
$d_r(\C_D)= (q^m-q^s)(q^k-q^\ell)-q^{m+k-r}+q^{m+\ell-r}+q^{k+s-r}-q^{s+\ell}$ for $1\le r\le k-\ell$.

\noindent {\bf Case 2:} $k-\ell<r \le m+\ell$.
According to Proposition \ref{prop:Commonzeros} we have
\begin{equation}\label{eq:0516}
\mathop{\rm max}\, \left\{|D\cap H_r^{\perp}| \right\}\le q^{m+k-r}-q^{m+\ell-r}.
\end{equation}
There are three subcases for discussion.
\begin{enumerate}
\item[(1)] $k-\ell<r \le k$. Let $T_1=\{(u_1,v_1),(u_2,v_2),\dots,(u_{k-\ell},v_{k-\ell})\}$, where $u_i$ and $v_i$ are in (\ref{eq:051502}) for $1\leq i\leq k-\ell$. It is easy to see that $\{T_1,(0,v_{k-\ell+1}),\dots,(0,v_k)\}$ is
$\mathbb{F}_q$-linear independent in ${\mathbb F}_{q^m}\times\bF_{q^k}$, where $v_i$ is in (\ref{eq:051501}) for $k-\ell +1\leq i\leq k$. Let \begin{equation}\label{eq:hr01}
H_r={\rm span}\{T_1,(0,v_{k-\ell+1}),\dots,(0,v_k)\}.
\end{equation}
By the definition of $\mathcal{D}_1$, $\mathcal{D}_2$ and $\mathcal{D}_3$, we have $|H_r\cap\mathcal{D}_1|=|H_r\cap\mathcal{D}_2|=0$, $|H_r\cap\mathcal{D}_3|=q^{k-\ell}-1$. From (\ref{eqn:commonzeros2}) we obtain
\begin{equation*}
|D\cap H_r^{\perp}|=\frac{1}{q^r}\left((q^m-q^s)(q^k-q^\ell)+q^{s+\ell}(q^{k-\ell}-1)\right)= q^{m+k-r}-q^{m+\ell-r}.
\end{equation*}
So, $H_r$ given in (\ref{eq:hr01}) in this case is such that the equality in (\ref{eq:0516}) holds.
\item[(2)] $k<r\le k+s$. Let $T_2=\{(0,v_{k-\ell+1}),(0,v_{k-\ell+2}),\dots,(0,v_{k})\}$ and
\begin{equation}\label{eq:hr02}
H_r={\rm span}\{T_1,T_2,(u_{m-s+1},0),\dots,(u_{r-s+m-k},0)\}.
\end{equation}
With an analysis similar as above, we have
\begin{eqnarray*}
|D\cap H_r^{\perp}|&=& q^{m+k-r}-q^{m+\ell-r}.
\end{eqnarray*}
So, $H_r$ given in (\ref{eq:hr02}) in this case is such that the equality in (\ref{eq:0516}) holds.

\item[(3)] $k+s<r\le m+\ell$. Let
$$T_3=\{(u_{k-\ell+1},0),(u_{k-\ell+2},0),\dots,(u_{r-\ell-s},0)\}\,\, \text{and}\,\, T_4=\{(u_{m-s+1},0),(u_{m-s+2},0),\dots,(u_{m},0)\},$$
where $u_i$ is in (\ref{eq:051502}) for $k-\ell+1\leq i\leq r-\ell-s$ and $m-s+1\leq i\leq m$.  It is clear that $\{T_1,T_2,T_3,T_4\}$ is
$\mathbb{F}_q$-linear independent in  ${\mathbb F}_{q^m}\times\bF_{q^k}$. Let
\begin{equation}\label{eq:hr03}
H_r={\rm span}\{T_1,T_2,T_3,T_4\}.
\end{equation}
By the definition of $\mathcal{D}_1$, $\mathcal{D}_2$ and $\mathcal{D}_3$, we have $|H_r\cap\mathcal{D}_1|=q^{r-k-s}-1$, $|H_r\cap\mathcal{D}_2|=0$ and $|H_r\cap\mathcal{D}_3|=q^{r-\ell-s}-q^{r-k-s}$. Then
\begin{equation*}
\begin{split}
|D\cap H_r^{\perp}|=&\frac{1}{q^r}\left((q^m-q^s)(q^k-q^\ell)+(-(q^k-q^\ell)q^s)\cdot(q^{r-k-s}-1)+q^{s+\ell}\cdot(q^{r-\ell-s}-q^{r-k-s})\right)\\
=& q^{m+k-r}-q^{m+\ell-r}.
\end{split}
\end{equation*}
So, $H_r$ given in (\ref{eq:hr03}) in this case is such that the equality in (\ref{eq:0516}) holds.
\end{enumerate}

From above results, we know that there exists an $r$-dimensional subspace $H_r$ of ${\mathbb F}_{q^m}\times\bF_{q^k}$ such that the equality in (\ref{eq:0516}) holds, which implies that
the $r$-th generalized Hamming weight of $\C_D$ is $d_r(\C_D)= (q^m-q^s)(q^k-q^\ell)-q^{m+k-r}+q^{m+\ell-r}$ for $k-\ell<r \le m+\ell$.

\noindent {\bf Case 3:} $m+\ell< r\le m+k$. According to Proposition \ref{prop:Commonzeros} we have
\begin{equation}\label{eq:0525}
\mathop{\rm max}\, \left\{|D\cap H_r^{\perp}| \right\}\le q^{m+k-r}-1.
\end{equation}
Denote $T_5=\left\{ (u_{1},0),\dots,(u_{r-m-\ell},0)\right\}$ and $T_6=\{(u_{k-\ell+1},0),(u_{k-\ell+2},0),\dots,(u_{m-s},0)\}$. Let
\begin{equation}\label{eq:hr04}
H_r={\rm span}\{T_1,T_2,T_3,T_5,T_6\}.
\end{equation}
By the definition of $\mathcal{D}_1$, $\mathcal{D}_2$ and $\mathcal{D}_3$, we have $|H_r\cap\mathcal{D}_1|=q^{r-k-s}-1$, $|H_r\cap\mathcal{D}_2|=q^{r-m-\ell}-1$, $|H_r\cap\mathcal{D}_3|=q^{r-\ell-s}-q^{r-m-\ell}-q^{r-k-s}+1$. Then
\begin{eqnarray*}
|D\cap H_r^{\perp}|&=&\frac{1}{q^r}\Big((q^m-q^s)(q^k-q^\ell)+(-(q^k-q^\ell)q^s)\cdot(q^{r-k-s}-1)\\
                     & & +(-(q^m-q^s)q^\ell)\cdot(q^{r-m-\ell}-1)+q^{s+\ell}(q^{r-\ell-s}-q^{r-m-\ell}-q^{r-k-s}+1)\Big)\\
                     &=& q^{m+k-r}-1.
\end{eqnarray*}
This shows that there exists an $r$-dimensional subspace $H_r$ of ${\mathbb F}_{q^m}\times\bF_{q^k}$ such that the equality in (\ref{eq:0525}) holds. Then the $r$-th generalized Hamming weight of $\C_D$ is
$d_r(\C_D)= (q^m-q^s)(q^k-q^\ell)-q^{m+k-r}+1$.

Combining the above three cases, the desired conclusion then follows.
\end{proof}

\begin{example}
Let $q=2, m=3, k=2, s=1, \ell=1$. Magma experiments show that $\C_D$ is a [12,5,4] linear code, and the weight hierarchy of $\C_D$ is
$\{wt_1=4,wt_2=8,wt_3=10,wt_4=11,wt_5=12\}$, which is consistent with result in Theorem \ref{th:GHW2}.
\end{example}
\begin{example}
Let $q=3, m=2, k=2, s=1, \ell=1$. Magma experiments show that $\C_D$ is a [36,4,18] linear code, and the weight hierarchy of $\C_D$ is
$\{ wt_1=18,wt_2=30,wt_3=34,wt_4=36\}$, which is consistent with result in Theorem \ref{th:GHW2}.
\end{example}

\begin{remark} To ensure the linear codes given in (\ref{eq:xy051104}) is nontrivial, the defining set in (\ref{eq:xy051101}) needs to satisfy that $q^{m-s}>q^{m+\ell-k-s}+1$.
In fact, this condition is always true except when $q=m=k=2$ and $s=\ell=1$. In this exceptional case, the linear  code
$$ \C= \left\{ ({\rm Tr}_1^{2}(ax)+{\rm Tr}_1^{2}(by))_{(x,y) \in (\bF_{2^2}\setminus \bF_{2})\times(\bF_{2^2}\setminus \bF_{2})}:a \in \bF_{2^2},b\in \bF_{2^2} \right\}$$
is a $[4,3,2]$ code with the weight hierarchy $\{wt_1=2,wt_2=3,wt_3=4\}$.
\end{remark}

\subsection{Linear codes from a defining-set of butterfly structure}

In this subsection, we will investigate the weight hierarchy of the linear code
\begin{equation}
\label{eqn:linearcode3}
\C_D = \{({\rm Tr}_1^m(ax+by))_{(x,y) \in D}:a,b \in \bF_{2^m} \},
\end{equation}
where the defining set is
\begin{equation}\label{eq:0526}
D = \{(x,y) \in \bF_{2^m}^2 : ({\rm Tr}_1^m (x(y+1)), {\rm Tr}_1^m(y(x+1))) = (0,1)\}.
\end{equation}
Here, $D$ is called butterfly structure in \cite{Canteaut2017,Perrin2016,Hu2021}.
In order to obtain the  weight hierarchy of $\C_D$, we need the following proposition.

\begin{proposition}
\label{prop:ghw3}
Let $\C_D$ be a linear code defined in (\ref{eqn:linearcode3}), then the $r$-th  generalized Hamming weight of $\C_D$ is
\begin{eqnarray*}
d_r(\C_D)=&2^{2m-2}-\mathop{\rm max}\Bigg\{\frac{1}{2^{r+2}}\Bigg(2^{2m}+2^m\sum_{(\alpha,\beta)\in H_r}((-1)^{{\rm Tr}_1^m(\beta(\alpha+1))} - (-1)^{{\rm Tr}_1^m(\alpha(\beta+1))})\\
          &-2^{2m}{\rm I}_{(1,1)}(H_r)\Bigg)\Bigg\},
\end{eqnarray*}
where $1\leq r\leq 2m$, $H_r$ is an $r$-dimensional subspace of $\mathbb{F}_{2^m}^2$ and ${\rm I}_{(1,1)}(H_r)$ is a discriminant function, i.e.
$$
{\rm I}_{(1,1)}(H_r)=
\begin{cases}
 1,& (1,1)\in H_r,
\\
0,& (1,1)\notin H_r.
\end{cases}
$$
\end{proposition}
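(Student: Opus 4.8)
The plan is to specialize Proposition~\ref{GHW2} to $q=2$, $k=m$ and to reduce every quantity needed to a single exponential sum. Writing $\zeta_2=-1$, set for $(\alpha,\beta)\in\bF_{2^m}^2$
$$S(\alpha,\beta)=\sum_{(x,y)\in D}(-1)^{{\rm Tr}_1^m(\alpha x)+{\rm Tr}_1^m(\beta y)}.$$
Everything I want can be read off from $S$: the length is $n=|D|=S(0,0)$; the kernel $K$ of (\ref{eq:xy0522}) is controlled by which $(\alpha,\beta)$ satisfy $S(\alpha,\beta)=|D|$; and by Lemma~\ref{lm:charactersum101} each intersection size is $|D\cap H_r^{\perp}|=\frac{1}{2^r}\sum_{(\alpha,\beta)\in H_r}S(\alpha,\beta)$. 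So the real content is to compute $S(\alpha,\beta)$ once and for all.

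To evaluate $S$ I would insert the indicators of the two defining conditions of $D$ from (\ref{eq:0526}). Detecting ${\rm Tr}_1^m(x(y+1))=0$ by $\frac12\sum_{u\in\bF_2}(-1)^{u{\rm Tr}_1^m(x(y+1))}$ and the inhomogeneous condition ${\rm Tr}_1^m(y(x+1))=1$ by $\frac12\sum_{v\in\bF_2}(-1)^{v}(-1)^{v{\rm Tr}_1^m(y(x+1))}$ splits $S(\alpha,\beta)$ into four inner sums over $(x,y)\in\bF_{2^m}^2$ indexed by $(u,v)\in\bF_2^2$. The term $(u,v)=(0,0)$ contributes $2^{2m}$ exactly when $\alpha=\beta=0$. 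The terms $(1,0)$ and $(0,1)$ each collapse after summing one variable: summing over $x$ forces $y=\alpha+1$ and yields $2^m(-1)^{{\rm Tr}_1^m(\beta(\alpha+1))}$, while summing over $y$ forces $x=\beta+1$ and yields (with the sign $(-1)^v$) $-2^m(-1)^{{\rm Tr}_1^m(\alpha(\beta+1))}$. The decisive simplification is the cross term $(u,v)=(1,1)$: in characteristic two
$$x(y+1)+y(x+1)=2xy+x+y=x+y,$$
so that term equals $-\sum_{x,y}(-1)^{{\rm Tr}_1^m((1+\alpha)x)+{\rm Tr}_1^m((1+\beta)y)}$, which vanishes unless $\alpha=\beta=1$ and then equals $-2^{2m}$. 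Assembling the four pieces gives the master identity
$$S(\alpha,\beta)=\frac14\Big(2^{2m}\,[\alpha=\beta=0]+2^m(-1)^{{\rm Tr}_1^m(\beta(\alpha+1))}-2^m(-1)^{{\rm Tr}_1^m(\alpha(\beta+1))}-2^{2m}\,[(\alpha,\beta)=(1,1)]\Big).$$

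With $S$ in hand the rest is bookkeeping. Taking $(\alpha,\beta)=(0,0)$ gives $n=|D|=S(0,0)=2^{2m-2}$. Since $|S(\alpha,\beta)|\le 2^{m-1}<2^{2m-2}$ for every $(\alpha,\beta)\notin\{(0,0),(1,1)\}$ (using $m\ge 2$) and $S(1,1)=-2^{2m-2}$, the only solution of $S(\alpha,\beta)=|D|$ is $(0,0)$; hence $K=\{(0,0)\}$, the code has dimension $2m$, and the side condition $H\cap K=\{(0,0)\}$ in Proposition~\ref{GHW2} is vacuous for every $r$-dimensional $H_r$. Substituting the master identity into $|D\cap H_r^{\perp}|=\frac{1}{2^r}\sum_{(\alpha,\beta)\in H_r}S(\alpha,\beta)$, and using $(0,0)\in H_r$ together with $\sum_{(\alpha,\beta)\in H_r}[(\alpha,\beta)=(1,1)]={\rm I}_{(1,1)}(H_r)$, gives
$$|D\cap H_r^{\perp}|=\frac{1}{2^{r+2}}\Big(2^{2m}+2^m\!\!\sum_{(\alpha,\beta)\in H_r}\!\!\big((-1)^{{\rm Tr}_1^m(\beta(\alpha+1))}-(-1)^{{\rm Tr}_1^m(\alpha(\beta+1))}\big)-2^{2m}{\rm I}_{(1,1)}(H_r)\Big),$$
whence $d_r(\C_D)=n-\max_{H_r}|D\cap H_r^{\perp}|$ is exactly the claimed formula. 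The main obstacle is the clean evaluation of the four inner sums, in particular keeping track of the sign $(-1)^v$ coming from the inhomogeneous trace condition and exploiting the characteristic-two cancellation $x(y+1)+y(x+1)=x+y$, which is what makes the $(1,1)$ contribution collapse into the single indicator ${\rm I}_{(1,1)}(H_r)$.
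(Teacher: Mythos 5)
Your proposal is correct and follows essentially the same route as the paper: both apply Proposition~\ref{GHW2} (after establishing $K=\{(0,0)\}$), insert $\pm 1$ indicators for the two trace conditions defining $D$, evaluate the same four exponential sums over $\bF_{2^m}^2$, and exploit the characteristic-two identity $x(y+1)+y(x+1)=x+y$ to collapse the cross term into the indicator ${\rm I}_{(1,1)}(H_r)$. The only (minor, and welcome) difference is organizational: you package the computation as a pointwise master identity for $S(\alpha,\beta)$ and then sum over $H_r$, which lets you derive $|D|=2^{2m-2}$ and $K=\{(0,0)\}$ (for $m\ge 2$) self-containedly, whereas the paper imports these facts from the known parameters in \cite{Hu2021}.
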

\begin{proof}
From \cite{Hu2021}, we know that the code $\C_D$ has parameters $[2^{2m-2},2m,2^{2m-3}-2^{m-2}]$. Then
 the subspace
$$K={\left\{ (x,y) \in {\mathbb F}_{q^m}^2 \, :\, ( {\rm Tr}_1^m(d_1x+d_2y) )_{(d_1,d_2) \in D} = {\bf 0} \right\}}$$
is $\{(0,0)\}$. Thus by  Proposition \ref{GHW2}, the $r$-th generalized Hamming weight of $\C_{D}$ is
\begin{equation}\label{eq:rghw2}
d_r(\C_D) =n-{\rm max}\, \left\{ |D\cap H^\perp| : H \in [{\mathbb F}_{2^m}^2, r] \right\},
\end{equation}
where
$$
H^{\perp}=\left\{ (x,y)\in {\mathbb F}_{2^m}^2\,:\,{\rm Tr}_1^m(\alpha x+\beta y)=0\ {\rm for\ any}\ (\alpha,\beta)\in H \right\}.
$$
Let $H_r$ be an $r$-dimensional subspace of $\mathbb{F}_{2^m}^2$. By the definition of $D$ and the dual of an $r$-dimensional subspace, we have that
\begin{equation}\label{eq:052503}
\begin{split}
|D\cap H_r^\perp|=&\frac{1}{2^{r+2}}\sum_{x,y\in \bF_{2^m}}\sum_{(\alpha,\beta)\in H_r}\Bigg(\left(1+(-1)^{{\rm Tr}_1^m (x(y+1))}\right)\left(1-(-1)^{{\rm Tr}_1^m (y(x+1))}\right)(-1)^{{\rm Tr}_1^m(\alpha x+\beta y)}\Bigg)\\
                   =&\frac{1}{2^{r+2}}\sum_{(\alpha,\beta)\in H_r}\sum_{x,y\in \bF_{2^m}}\Bigg((-1)^{{\rm Tr}_1^m(\alpha x+\beta y)}+(-1)^{{\rm Tr}_1^m (x(y+1)+\alpha x+\beta y)}\\
                   & -(-1)^{{\rm Tr}_1^m (y(x+1)+\alpha x+\beta y)}-(-1)^{{\rm Tr}_1^m(x(y+1)+y(x+1)+\alpha x+\beta y)}\Bigg).
\end{split}
\end{equation}
Obviously,
\begin{equation}\label{eq:052504}
\begin{split}
\sum_{(\alpha,\beta)\in H_r}\sum_{x,y\in \bF_{2^m}}(-1)^{{\rm Tr}_1^m (x(y+1)+\alpha x+\beta y)}&=\sum_{(\alpha,\beta)\in H_r}\sum_{x\in \bF_{2^m}}(-1)^{{\rm Tr}_1^m (x(\alpha+1))}\sum_{y\in \bF_{2^m}}(-1)^{{\rm Tr}_1^m (y(\beta+x))}\\
&=2^m\sum_{(\alpha,\beta)\in H_r}(-1)^{{\rm Tr}_1^m(\beta(\alpha+1))}.
\end{split}
\end{equation}
Similarly, we have
\begin{equation}\label{eq:052505}
\begin{split}
\sum_{(\alpha,\beta)\in H_r}\sum_{x,y\in \bF_{2^m}}(-1)^{{\rm Tr}_1^m (y(x+1)+\alpha x+\beta y)}=2^m\sum_{(\alpha,\beta)\in H_r}(-1)^{{\rm Tr}_1^m(\alpha(\beta+1))}.
\end{split}
\end{equation}
In addition, we know that
\begin{equation}\label{eq:052506}
\begin{split}
\sum_{(\alpha,\beta)\in H_r}\sum_{x,y\in \bF_{2^m}}(-1)^{{\rm Tr}_1^m(x(y+1)+y(x+1)+\alpha x+\beta y)}&=\sum_{(\alpha,\beta)\in H_r}\sum_{x\in \bF_{2^m}}(-1)^{{\rm Tr}_1^m (x(\alpha+1))}\sum_{y\in \bF_{2^m}}(-1)^{{\rm Tr}_1^m (y(\beta+1))}\\
&=
\begin{cases}
2^{2m},& (1,1)\in H_r,
\\
0,& (1,1)\notin H_r.
\end{cases}
\end{split}
\end{equation}
From (\ref{eq:rghw2})-(\ref{eq:052506}), the desired conclusion then follows.
\end{proof}

From Proposition \ref{prop:ghw3}, to obtain the $r$-th generalized Hamming weight of $\C_D$, we are required to find an $r$-dimensional subspace $H$ of $\bF_{2^m}^2$ that does not contain $(1,1)$, but contains as many as possible
points $(\alpha, \beta)\in H$ satisfying $({\rm Tr}_1^m(\beta(\alpha+1)),{\rm Tr}_1^m(\alpha(\beta+1))) = (0,1)$ . To archive this, we need two lemmas below.

\begin{lemma}\label{lmm:differeceofcharacter1}
Let $H_r$ be an $r$-dimensional subspace of $\mathbb{F}_{2^m}^2$. Then
$$
|\{(\alpha, \beta)\in H_r: {\rm Tr}_1^m(\beta(\alpha+1))=0, {\rm Tr}_1^m(\alpha(\beta+1))=1\}|\le 2^{r-1}.
$$

\end{lemma}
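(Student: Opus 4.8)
The plan is to reduce the two quadratic-looking trace conditions to a single \emph{linear} condition on $(\alpha,\beta)$, after which the bound is immediate linear algebra over $\mathbb{F}_2$. First I would expand both conditions using additivity of the trace in characteristic $2$: since
$$
{\rm Tr}_1^m(\beta(\alpha+1))={\rm Tr}_1^m(\alpha\beta)+{\rm Tr}_1^m(\beta),\qquad
{\rm Tr}_1^m(\alpha(\beta+1))={\rm Tr}_1^m(\alpha\beta)+{\rm Tr}_1^m(\alpha),
$$
the requirement that these equal $0$ and $1$ respectively can be combined by adding them over $\mathbb{F}_2$. The bilinear term ${\rm Tr}_1^m(\alpha\beta)$ cancels (it appears twice), leaving the purely linear relation
$$
{\rm Tr}_1^m(\alpha)+{\rm Tr}_1^m(\beta)={\rm Tr}_1^m(\alpha+\beta)=1 .
$$
Hence every pair $(\alpha,\beta)$ counted in the lemma must satisfy ${\rm Tr}_1^m(\alpha+\beta)=1$; that is, the set in question is contained in the fiber
$$
S=\{(\alpha,\beta)\in H_r:{\rm Tr}_1^m(\alpha+\beta)=1\}.
$$

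Next I would exploit that the map $\psi\colon H_r\to\mathbb{F}_2$ defined by $\psi(\alpha,\beta)={\rm Tr}_1^m(\alpha+\beta)$ is $\mathbb{F}_2$-linear on the $r$-dimensional subspace $H_r$. There are only two possibilities. If $\psi$ is identically zero, then $S=\emptyset$ and the count is $0\le 2^{r-1}$. Otherwise $\psi$ is surjective, so $\ker\psi$ is a subspace of $H_r$ of dimension $r-1$, and $S=\psi^{-1}(1)$ is a coset of $\ker\psi$; therefore $|S|=|\ker\psi|=2^{r-1}$. In either case $|S|\le 2^{r-1}$, and since the set described in the lemma is a subset of $S$, the desired inequality follows.

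There is no substantive obstacle here: the only idea needed is recognizing that summing the two conditions eliminates the quadratic cross-term and collapses everything to an index-at-most-two linear constraint. The one point to state carefully is the dichotomy for $\psi$ (zero functional versus surjective functional), so that the bound $2^{r-1}$ is justified uniformly for all $H_r$ regardless of whether $\alpha+\beta$ ever has trace $1$ on $H_r$. Everything else is routine, and no use of the structure of the defining set $D$ beyond the algebraic identities above is required.
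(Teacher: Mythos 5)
Your proof is correct, and it takes a genuinely different route from the paper's. The paper never identifies the linear constraint explicitly: it defines $V=\{(\alpha,\beta)\in H_r:\varphi(\alpha,\beta)=(0,1)\}$ and shows that for any two elements $(\alpha_1,\beta_1),(\alpha_2,\beta_2)\in V$ the sum has $\varphi(\alpha_1+\alpha_2,\beta_1+\beta_2)=\bigl({\rm Tr}_1^m(\alpha_1\beta_2+\alpha_2\beta_1),{\rm Tr}_1^m(\alpha_1\beta_2+\alpha_2\beta_1)\bigr)$, whose two coordinates are equal and hence cannot be $(0,1)$; thus $(v+V)\cap V=\emptyset$ for every $v\in V$, and since $V$ and $v+V$ are disjoint subsets of $H_r$ of the same size, $2|V|\le |H_r|=2^r$. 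Your argument instead makes the hidden linear structure explicit: adding the two trace conditions kills the bilinear term and shows $V\subseteq\psi^{-1}(1)$ where $\psi(\alpha,\beta)={\rm Tr}_1^m(\alpha+\beta)$ is a linear functional on $H_r$, so the bound is immediate from rank--nullity. The two proofs rest on the same algebraic identity --- the paper's sum-avoidance holds precisely because the sum of the two coordinates of $\varphi$ equals $\psi$, which is additive and equals $1$ on $V$ --- but your packaging is more direct and yields slightly more information (the counted set lies in an affine hyperplane of $H_r$, and is empty whenever $\psi$ vanishes identically on $H_r$), whereas the paper's translation-disjointness argument is a combinatorial device that would also apply in settings where no explicit linear functional is isolated.
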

\begin{proof}
Let $\varphi $ be a map from $\bF_{2^m}^2$ to $\bF_{2}^2$, which is given as follows
\begin{equation*}
\begin{split}
\varphi:\,\,\,  &\bF_{2^m}\times\bF_{2^m}  \longrightarrow   \bF_{2}\times\bF_{2} \\
   &     (x,y) \,\,\,\,\,\,\,\,\,\,\,\,\,\,\,\,\longmapsto ({\rm Tr}_1^m(y(x+1)),{\rm Tr}_1^m(x(y+1)))\mbox{.}
\end{split}
\end{equation*}
Let $V$ be the set
$$
V=\{(\alpha, \beta)\in H_r: \varphi(\alpha,\beta)=(0,1)\}.
$$
If there exist $(\alpha_1,\beta_1),(\alpha_2,\beta_2)\in V$ , then we have
$${\rm Tr}_1^m(\alpha_1\beta_1+\alpha_1)={\rm Tr}_1^m(\alpha_2\beta_2+\alpha_2)=1$$
and
$${\rm Tr}_1^m(\alpha_1\beta_1+\beta_1)={\rm Tr}_1^m(\alpha_2\beta_2+\beta_2)=0.$$
These imply that
$$\varphi(\alpha_1+\alpha_2,\beta_1+\beta_2)=({\rm Tr}_1^m(\alpha_1\beta_2+\alpha_2\beta_1),{\rm Tr}_1^m(\alpha_1\beta_2+\alpha_2\beta_1))\neq(0,1).$$
Hence, we have $(\alpha_1+\alpha_2,\beta_1+\beta_2)\notin V$.  This means that  if $V$ is not a empty set, then $(v+V)\cap V=\emptyset$ for any $v\in V$.
Since $V$ and $v+V$ are subsets of $H_r$, we can get $2|V|\le|H_r|=2^r$. This completes the proof.
\end{proof}

\begin{lemma}
\label{lmm:differeceofcharacter2}
Let the notation be given as in Proposition \ref{prop:ghw3}, then
$$-2^m\le\sum_{(\alpha, \beta) \in H_r}\left((-1)^{{\rm Tr}_1^m(\beta(\alpha+1))} - (-1)^{{\rm Tr}_1^m(\alpha(\beta+1))}\right) \le 2^m.$$ Specifically,
$$\sum_{(\alpha, \beta) \in H_r}\left((-1)^{{\rm Tr}_1^m(\beta(\alpha+1))} - (-1)^{{\rm Tr}_1^m(\alpha(\beta+1))}\right)=0$$
if $(1,1)\in H_r$.
\end{lemma}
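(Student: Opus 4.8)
The plan is to exploit translation by the point $e=(1,1)$. Write $S$ for the sum in the statement, set $\tilde Q_1(\alpha,\beta)={\rm Tr}_1^m(\beta(\alpha+1))$ and $\tilde Q_2(\alpha,\beta)={\rm Tr}_1^m(\alpha(\beta+1))$, and put $S_i=\sum_{(\alpha,\beta)\in H_r}(-1)^{\tilde Q_i(\alpha,\beta)}$, so that $S=S_1-S_2$. The first step is the elementary substitution identity
\[ \tilde Q_1(\alpha+1,\beta+1)={\rm Tr}_1^m\big((\beta+1)\alpha\big)=\tilde Q_2(\alpha,\beta), \]
which says that translating by $e$ interchanges the two exponents. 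If $(1,1)\in H_r$, then $(\alpha,\beta)\mapsto(\alpha+1,\beta+1)$ permutes $H_r$, so $S_2=\sum_{(\alpha,\beta)\in H_r}(-1)^{\tilde Q_1(\alpha+1,\beta+1)}=S_1$ and hence $S=0$; this already gives the final ``specifically'' assertion.

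For the two-sided bound I would assume $(1,1)\notin H_r$ (the case just treated gives $S=0$, which lies inside the bound). Set $H_r'=H_r\oplus\langle(1,1)\rangle$, an $(r+1)$-dimensional subspace, and let $\chi\colon H_r'\to\mathbb{F}_2$ be the linear functional equal to $0$ on $H_r$ and to $1$ on the coset $H_r+e$. By the identity above, the sum of $(-1)^{\tilde Q_1}$ over $H_r+e$ equals $S_2$, so
\[ S=\sum_{(\alpha,\beta)\in H_r}(-1)^{\tilde Q_1(\alpha,\beta)}-\sum_{(\alpha,\beta)\in H_r+e}(-1)^{\tilde Q_1(\alpha,\beta)}=\sum_{(\alpha,\beta)\in H_r'}(-1)^{\tilde Q_1(\alpha,\beta)+\chi(\alpha,\beta)}. \]
The purpose of this reduction is to rewrite the difference $S_1-S_2$ as a single quadratic-form character sum over $H_r'$, which sidesteps any delicate cancellation between $S_1$ and $S_2$.

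Finally I would invoke the standard evaluation of quadratic-form exponential sums: for a quadratic form $F$ on an $n$-dimensional $\mathbb{F}_2$-space $V$, the magnitude $\big|\sum_{v\in V}(-1)^{F(v)}\big|$ is either $0$ or $2^{(n+d)/2}$, where $d=\dim\mathrm{rad}(B_F)$ and $B_F$ is the alternating bilinear form associated with $F$. Here $F=\tilde Q_1+\chi$, whose associated form is $B\big((\alpha_1,\beta_1),(\alpha_2,\beta_2)\big)={\rm Tr}_1^m(\alpha_1\beta_2+\alpha_2\beta_1)$, the linear parts $\chi$ and ${\rm Tr}_1^m(\beta)$ being irrelevant to $B$. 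Since $B$ is nondegenerate on $\mathbb{F}_{2^m}^2$, for any subspace $V$ of dimension $n$ one has $\mathrm{rad}(B|_V)=V\cap V^{\perp}$ with $\dim(V\cap V^{\perp})\le\min\{n,\,2m-n\}$; taking $n=r+1$ yields $(n+d)/2\le\min\{n,m\}\le m$, so $|S|\le 2^m$. I expect the main obstacle to be precisely this bound: one must carry out the reduction to a single quadratic form rather than estimating $S_1$ and $S_2$ separately, because each of $S_1,S_2$ can be as large as $2^m$ and a crude triangle inequality would only produce $2^{m+1}$. The remaining care lies in the radical-dimension estimate and in checking $(n+d)/2\le m$ over both ranges $r+1\le m$ and $r+1>m$.
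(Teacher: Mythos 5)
Your proof is correct, and it takes a genuinely different route from the paper's. The paper never invokes quadratic-form theory: it observes that $S=\sum_{(\alpha,\beta)\in H_r}\bigl((-1)^{{\rm Tr}_1^m(\beta(\alpha+1))}-(-1)^{{\rm Tr}_1^m(\alpha(\beta+1))}\bigr)$ enters the formula of Proposition~\ref{prop:ghw3} for the cardinality $|D\cap H_r^\perp|$ with positive coefficient, so $|D\cap H_r^\perp|\ge 0$ immediately forces $S\ge-2^m$ when $(1,1)\notin H_r$ and $S\ge 0$ when $(1,1)\in H_r$; the matching upper bounds then come for free from the coordinate swap $H_r'=\{(\alpha,\beta):(\beta,\alpha)\in H_r\}$, which replaces $S$ by $-S$ while preserving whether $(1,1)$ lies in the subspace. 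You instead exploit translation by $(1,1)$: the identity ${\rm Tr}_1^m\bigl((\beta+1)((\alpha+1)+1)\bigr)={\rm Tr}_1^m(\alpha(\beta+1))$ gives the vanishing claim at once when $(1,1)\in H_r$, and when $(1,1)\notin H_r$ it collapses $S_1-S_2$ into a single character sum of the quadratic function ${\rm Tr}_1^m(\alpha\beta)+{\rm Tr}_1^m(\beta)+\chi$ over the $(r+1)$-dimensional space $H_r\oplus\langle(1,1)\rangle$, which you bound by $2^m$ via the radical estimate for the nondegenerate alternating form ${\rm Tr}_1^m(\alpha_1\beta_2+\alpha_2\beta_1)$. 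I checked the key steps: the translation identity, the linearity of $\chi$ (it is the indicator of the index-two coset), the equality $\mathrm{rad}(B|_V)=V\cap V^{\perp}$, and the inequality $(n+d)/2\le\min\{n,m\}\le m$ in both ranges $r+1\le m$ and $r+1>m$; all are sound. The paper's route buys brevity and elementarity --- only a counting inequality plus a symmetry, no classification of binary quadratic forms --- at the price of being tied to the code-theoretic identity~(\ref{eq:0520}). Your route is independent of Proposition~\ref{prop:ghw3} and yields strictly more information: $|S|$ is either $0$ or a power of two of the form $2^{(r+1+d)/2}$, which pinpoints exactly when the extreme value $2^m$ can occur (radical of maximal dimension), the kind of structural information that is relevant when Theorem~\ref{thm:GHW3} constructs subspaces attaining the bound.
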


\begin{proof}
 According to Proposition \ref{prop:ghw3}, we have
\begin{equation}\label{eq:0520}
|D\cap H_r^\perp|=\frac{1}{2^{r+2}}\left(2^{2m}+2^m\sum_{(\alpha,\beta)\in H_r}\left((-1)^{{\rm Tr}_1^m(\beta(\alpha+1))} - (-1)^{{\rm Tr}_1^m(\alpha(\beta+1))}\right)-2^{2m}{\rm I}_{(1,1)}(H_r)\right).
\end{equation}
There are two cases for discussion.

\noindent {\bf Case 1:} $(1,1)\notin H_r$. It is clear that $I_{(1,1)}(H_r)=0$. Since $|D\cap H_r^\perp|\geq 0$, from (\ref{eq:0520}) we have
\begin{equation}\label{eq:052001}
\sum_{(\alpha, \beta) \in H_r}((-1)^{{\rm Tr}_1^m(\beta(\alpha+1))} - (-1)^{{\rm Tr}_1^m(\alpha(\beta+1))}) \ge -2^m.
\end{equation}
Let
\begin{equation}\label{eq:052002}
H'_r=\{(\alpha,\beta)\in\bF_{2^m}^2:\,(\beta,\alpha)\in H_r\}.
\end{equation}
Then $H'_r$ is also an $\bF_2$-subspace of $\bF_{2^m}^2$ and $(1,1)\notin H'_r$. Thus it holds that
\begin{equation}\label{eq:052003}
\sum_{(\alpha, \beta) \in H'_r}((-1)^{{\rm Tr}_1^m(\beta(\alpha+1))} - (-1)^{{\rm Tr}_1^m(\alpha(\beta+1))}) \ge -2^m.
\end{equation}
From (\ref{eq:052001})-(\ref{eq:052003}), we have
$$-2^m\leq \sum_{(\alpha, \beta) \in H_r}((-1)^{{\rm Tr}_1^m(\beta(\alpha+1))} - (-1)^{{\rm Tr}_1^m(\alpha(\beta+1))})\leq 2^m.$$

\noindent {\bf Case 2:} $(1,1)\in H_r$. With an analysis similar as above,
 we can obtain that
$$
0\le \sum_{(\alpha, \beta) \in H_r}((-1)^{{\rm Tr}_1^m(\beta(\alpha+1))} - (-1)^{{\rm Tr}_1^m(\alpha(\beta+1))})\le 0,
$$
i.e. $\sum_{(\alpha, \beta) \in H_r}((-1)^{{\rm Tr}_1^m(\beta(\alpha+1))} - (-1)^{{\rm Tr}_1^m(\alpha(\beta+1))})=0$. This completes the proof.
\end{proof}

In the following,  we determine the weight hierarchy of the third class of linear codes.

\begin{theorem}
\label{thm:GHW3}
Let $\C_D$ be a linear code defined in (\ref{eqn:linearcode3}) with the defining set in (\ref{eq:0526}).
 Then $\C_D$ is a $[2^{2m-2},2m,2^{2m-3}-2^{m-2}]$ code and the $r$-th generalized Hamming weight of $\C_D$ is given by $$d_r(\C_D) =\begin{cases}
 2^{2m-2}-2^{2m-r-2}-2^{m-2},& \text{$1\le r\le m $},
\\
2^{2m-2}-2^{2m-r-1},& \text{$m<r < 2m$},
\\
2^{2m-2},& \text{$r = 2m$}.
\end{cases}$$

\end{theorem}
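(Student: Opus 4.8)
The plan is to apply Proposition~\ref{prop:ghw3}, which expresses
$$d_r(\C_D)=2^{2m-2}-\max_{H_r}\frac{1}{2^{r+2}}\left(2^{2m}+2^m S(H_r)-2^{2m}{\rm I}_{(1,1)}(H_r)\right),$$
where the maximum runs over all $r$-dimensional subspaces $H_r\subseteq\bF_{2^m}^2$ and $S(H_r)=\sum_{(\alpha,\beta)\in H_r}\left((-1)^{{\rm Tr}_1^m(\beta(\alpha+1))}-(-1)^{{\rm Tr}_1^m(\alpha(\beta+1))}\right)$. First I would classify each $(\alpha,\beta)\in H_r$ by the pair $({\rm Tr}_1^m(\beta(\alpha+1)),{\rm Tr}_1^m(\alpha(\beta+1)))$ and write $n_{01}$, $n_{10}$ for the numbers of points producing $(0,1)$ and $(1,0)$; evaluating $(-1)^a-(-1)^b$ on the four possible pairs shows that only these two classes contribute, giving $S(H_r)=2(n_{01}-n_{10})$. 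Since $\frac{1}{2^{r+2}}(2^{2m}+2^mS(H_r))$ is increasing in $n_{01}-n_{10}$, and since any $H_r$ containing $(1,1)$ forces $S(H_r)=0$ and ${\rm I}_{(1,1)}(H_r)=1$ (whence $|D\cap H_r^\perp|=0$) by Lemma~\ref{lmm:differeceofcharacter2}, the optimization reduces to maximizing $n_{01}-n_{10}$ over subspaces avoiding $(1,1)$.

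For the upper bound I would combine the two lemmas. Lemma~\ref{lmm:differeceofcharacter1} gives $n_{01}\le 2^{r-1}$, hence $n_{01}-n_{10}\le 2^{r-1}$, while Lemma~\ref{lmm:differeceofcharacter2} gives $n_{01}-n_{10}\le 2^{m-1}$. When $1\le r\le m$ the bound $2^{r-1}$ is the smaller one and yields $|D\cap H_r^\perp|\le 2^{2m-r-2}+2^{m-2}$; when $m<r<2m$ the bound $2^{m-1}$ dominates and yields $|D\cap H_r^\perp|\le 2^{2m-r-1}$; when $r=2m$ the only candidate is $\bF_{2^m}^2$, which contains $(1,1)$ and forces $|D\cap H_r^\perp|=0$. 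Substituting into $d_r(\C_D)=2^{2m-2}-\max|D\cap H_r^\perp|$ produces the claimed numbers as lower bounds for $d_r(\C_D)$.

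It remains to realize each bound by an explicit subspace. For $1\le r\le m$ I would take $H_r=U\times\{0\}$, where $U$ is any $r$-dimensional subspace of $\bF_{2^m}$ not contained in $\ker{\rm Tr}_1^m$: every point has first coordinate $0$, so $n_{10}=0$, while $n_{01}=|\{\alpha\in U:{\rm Tr}_1^m(\alpha)=1\}|=2^{r-1}$, and $(1,1)\notin H_r$. For $m<r<2m$ I would take $H_r=\bF_{2^m}\times W$, where $W$ is an $(r-m)$-dimensional subspace of $\bF_{2^m}$ with $1\notin W$ (for example a subspace of $\{x:{\rm Tr}_1^m(cx)=0\}$ for some $c$ with ${\rm Tr}_1^m(c)=1$). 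Summing over $\alpha\in\bF_{2^m}$ first, additive-character orthogonality collapses $S(H_r)$ to the single contribution from $\beta=0$, namely $2^m$, because the potential $\beta=1$ term is excluded by $1\notin W$; thus $n_{01}-n_{10}=2^{m-1}$ and again $(1,1)\notin H_r$. The case $r=2m$ is immediate, since $H=\bF_{2^m}^2$ gives $H^\perp=\{(0,0)\}$ and $(0,0)\notin D$.

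The main obstacle is the construction step, not the bounding step: for each range of $r$ one must supply an $r$-dimensional subspace that both avoids $(1,1)$ and saturates the relevant inequality, and then confirm by a character-sum evaluation that $S(H_r)$ attains its extreme value. The verification for $m<r<2m$ is the most delicate, as it hinges on showing that orthogonality of additive characters over $\bF_{2^m}$ isolates exactly the terms $\beta=0$ and $\beta=1$ in $S(H_r)$, with the latter suppressed precisely by the hypothesis $1\notin W$.
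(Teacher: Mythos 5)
Your proposal is correct and follows essentially the same route as the paper's proof: it applies Proposition~\ref{prop:ghw3}, combines Lemmas~\ref{lmm:differeceofcharacter1} and~\ref{lmm:differeceofcharacter2} to get the upper bound $|D\cap H_r^\perp|\le 2^{2m-r-2}+2^{m-r-2+\min\{r,m\}}$ for subspaces avoiding $(1,1)$, and saturates it with the same extremal subspaces $U\times\{0\}$ (for $1\le r\le m$) and $\bF_{2^m}\times W$ with $1\notin W$ (for $m<r<2m$). The only cosmetic differences are your explicit $n_{01},n_{10}$ bookkeeping and your direct handling of $r=2m$ via $H^\perp=\{(0,0)\}$ and $(0,0)\notin D$, where the paper instead invokes Lemma~\ref{lmm:differeceofcharacter2}.
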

\begin{proof}
From \cite{Hu2021}, we know that the code $\C_D$ has parameters $[2^{2m-2},2m,2^{2m-3}-2^{m-2}]$. From Proposition \ref{prop:ghw3}, the $r$-th generalized Hamming weight of $\C_D$ can be derived that
\begin{equation}\label{eq:052005}
d_r(\C_D) =n-{\rm max}\, \left\{ |D\cap H^\perp| : H \in [{\mathbb F}_{2^m}^2, r] \right\},
\end{equation}
where
\begin{equation}
\label{equ:commonzeros3}
|D\cap H^\perp|=\frac{1}{2^{r+2}}\left(2^{2m}+2^m\sum_{(\alpha,\beta)\in H}((-1)^{{\rm Tr}_1^m(\beta(\alpha+1))} - (-1)^{{\rm Tr}_1^m(\alpha(\beta+1))})-2^{2m}{\rm I}_{(1,1)}(H)\right).
\end{equation}
Let $H_r$ be an $r$-dimensional $\bF_2$-subspace of $\bF_{2^m}^2$. If $r=2m$, i.e., $H_r=\mathbb{F}_{2^m}^2$, it is clear that $(1,1)\in H_r$. According to  Lemma \ref{lmm:differeceofcharacter2} and (\ref{equ:commonzeros3}), we have $|D\cap H_r^\perp|=0$. From (\ref{eq:052005}) we have $d_{2m}(\C_D)=2^{2m-2}$. If $1\le r<2m$, from Lemmas \ref{lmm:differeceofcharacter1} and \ref{lmm:differeceofcharacter2} we obtain
\begin{equation}
\label{eqn:commonzeros4}
|D\cap H^\perp_r|\le \frac{1}{2^{r+2}}(2^{2m}+2^m\cdot {\rm min}\{2^r,2^m\})=2^{2m-r-2}+2^{m-r-2+{\rm min}\{r,m\}}.
\end{equation}
In the following, we aim to construct the subspace $H_r$ such that the equality holds in (\ref{eqn:commonzeros4}). There are two cases for discussion.

\noindent {\bf Case 1:} $1\le r \le m$. Let $V$ be an $r$-dimensional $\bF_2$-subspace of $\bF_{2^m}$ such that there exists a $v\in V$ with ${\rm Tr}_1^m(v)=1$. It follows that $|\{x\in V:\,{\rm Tr}_1^m(x)=1\}|=2^{r-1}$ since $\{x\in V:\,{\rm Tr}_1^m(x)=1\}+v= \{x\in V:\,{\rm Tr}_1^m(x)=0\}$. Denote $H$ as the set $H=\{(x,0):\,x\in V\}$ and $H$ will be an $r$-dimensional $\bF_2$-subspace of $\bF_{2^m}^2$. Then from (\ref{equ:commonzeros3}) we know that there exists an $r$-dimensional $H_r$ such that
$$
|D\cap H_r^\perp|=\frac{1}{2^{r+2}}\left(2^{2m}+2^m\sum_{u\in V}(1-(-1)^{{\rm Tr}_1^m(u)})\right)=2^{2m-r-2}+2^{m-2}.
$$
Consequently, we have $$\mathop{\rm max}\limits_{H\in [\bF_{2^m}^2,r]}\, \left\{|D\cap H^\perp| \right\}= 2^{2m-r-2}+2^{m-2},$$ and the $r$-th generalized Hamming weight of $\C_D$ is $$d_r(\C_D)=2^{2m-2}-2^{2m-r-2}-2^{m-2}.$$

\noindent {\bf Case 2:} $m<r<2m$. Let $V$ be an $(r-m)$-dimensional $\bF_2$-subspace of $\bF_{2^m}$ such that $1\notin V$. Denote $H$ as the set $H=\{(x,y):\,x\in \bF_{2^m}, y\in V\}$. Then, $H$ is an $r$-dimensional $\bF_2$-subspace of $\bF_{2^m}^2$. We have
\begin{eqnarray*}
|D\cap H^\perp|&=&\frac{1}{2^{r+2}}\left(2^{2m}+2^m\sum_{\alpha\in\bF_{2^m},\beta\in V}((-1)^{{\rm Tr}_1^m(\beta(\alpha+1))} - (-1)^{{\rm Tr}_1^m(\alpha(\beta+1))})\right)\\
                   &=&\frac{1}{2^{r+2}}\left(2^{2m}+2^m\sum_{\alpha\in\bF_{2^m},\beta=0}(-1)^{{\rm Tr}_1^m(\beta(\alpha+1))}\right)\\
                   &=&2^{2m-r-1}.
\end{eqnarray*}
Thus, from (\ref{eqn:commonzeros4}) we can conclude that there exists an $r$-dimensional $H_r$ such that  $$ |D\cap H_r^\perp|= 2^{2m-r-1}$$
and the $r$-th generalized Hamming weight of $\C_D$ is $$d_r(\C_D)=2^{2m-2}-2^{2m-r-1}.$$
This completes the proof.
\end{proof}
\begin{example}
Let $m=2$. Magma experiments show that $\C_D$ is a [4,4,1] linear code, and the weight hierarchy of $\C_D$ is
$\{ wt_1=1,wt_2=2,wt_3=3,wt_4=4\}$, which is consistent with result in Theorem \ref{thm:GHW3}.
\end{example}
\begin{example}
Let $m=3$. Magma experiments show that $\C_D$ is a [16,6,6] linear code, and the weight hierarchy of $\C_D$ is
$\{ wt_1=6,wt_2=10,wt_3=12,wt_4=14,wt_5=15,wt_6=16\}$, which is consistent with result in Theorem \ref{thm:GHW3}.
\end{example}
\begin{remark}
By selecting the defining set $D=\{(x,y) \in \bF_{2^m}^2 : ({\rm Tr}_1^m (x(y+1)), {\rm Tr}_1^m(y(x+1))) = (1,0)\}$, the same result as Theorem \ref{thm:GHW3} will be obtained. However, when choosing the defining set $D=\{(x,y) \in \bF_{2^m}^2 : ({\rm Tr}_1^m (x(y+1)), {\rm Tr}_1^m(y(x+1))) = (0,0)\}$ or altering $(0,0)$ to $(1,1)$, the dimension of the code $\C_D$ is deviates from $2m$, posing challenges in calculating the GHWs of $\C_D$.
\end{remark}

\section{Concluding remarks}
In this paper, we determined the weight hierarchies of three classes of linear codes from defining set construction. The main ideas of this paper are as follows. We first obtain the low bound on the generalized Hamming weight of the considered codes and then construct the subspaces of the discussed codes which archive the bound. Interested readers are cordially invited to investigate the weight hierarchies of linear codes constructed from different defining-sets through this approach.

\end{document}